\documentclass[11pt,a4paper]{article}
\usepackage[margin=2.5cm]{geometry}
\usepackage[T1]{fontenc}

\title{Unit Interval Selection in Random Order Streams}

\author{Cezar-Mihail Alexandru\thanks{\texttt{sorin\_cezar@yahoo.com}, Independent Researcher, Bristol, UK}
        \and Adithya Diddapur\footnotemark[3]
        \and Magn\'us M. Halld\'orsson\thanks{\texttt{mmh@ru.is}, ICE-TCS \& Department of Computer Science, Reykjavik University, Iceland. Partially supported by Icelandic Research Fund grant 2511609.}
        \and Christian Konrad\thanks{\texttt{\{adi.diddapur,christian.konrad\}@bristol.ac.uk}, School of Computer Science, University of Bristol, Bristol, UK.}
        \and Kheeran K. Naidu\thanks{\texttt{kheeran.naidu@gmail.com}, Qworky Research, London, UK}}

\date{}

\usepackage{algorithm,algorithmicx}
\usepackage{amssymb} 
\usepackage{amsthm}
\usepackage{amsmath} 
\usepackage[noend]{algpseudocode}
\usepackage{algorithm}
\usepackage{thm-restate}
\usepackage{url}
\usepackage{hyperref}
\usepackage{cleveref}

\usepackage[skins,breakable]{tcolorbox}

\newtheorem{theorem}{Theorem}
\newtheorem{lemma}{Lemma}
\newtheorem{definition}{Definition}
\newtheorem{claim}{Claim}
\newtheorem{observation}{Observation}[section]

\newcommand{\cD}{\mathcal{D}}
\newcommand{\cS}{\mathcal{S}}
\newcommand{\cA}{\mathcal{A}}
\newcommand{\bZ}{\mathbb{Z}}
\newcommand{\bE}{\mathbb{E}}
\newcommand{\bP}{\mathbb{P}}
\newcommand{\cT}{\mathcal{T}}

\DeclareMathOperator{\Exp}{\mathbb{E}}

\usepackage[normalem]{ulem}

\begin{document}

\maketitle

\begin{abstract}
We consider the \textsf{Unit Interval Selection} problem in the one-pass random order streaming model. In this setting, an algorithm is presented with a sequence of $n$ unit-length intervals on the line that arrive in uniform random order, one at a time, and the objective is to output (an approximation of) a largest set of disjoint intervals using space linear in the size of an optimal solution. Previous work only considered adversarially ordered streams and established that, within these space constraints, a $(2/3)$-approximation can be achieved in such streams, and this is best possible, in that going beyond such an approximation factor requires space $\Omega(n)$ [Emek et al., TALG'16] .

\vspace{0.1cm}

In this work, we show that an improved expected approximation factor can be achieved if the input stream is in uniform random order, where the expectation is taken over the stream order. More specifically, we give a one-pass streaming algorithm with expected approximation factor $0.7401$ that uses space $O(|OPT|)$, where $OPT$ denotes an optimal solution. We also show that random order algorithms with expected approximation factor above $8/9$ require space $\Omega(n)$, and algorithms that compute a better than $2/3$-approximation with probability above $2/3$ also require $\Omega(n)$ space.

\vspace{0.1cm}

On a technical level, we design an algorithm for the restricted domain $[0, \Delta)$, for some constant $\Delta$, and use standard techniques to obtain an algorithm for unrestricted domains. For the restricted domain $[0, \Delta)$, we run $O(\Delta)$ recursive instances of our algorithm, with each instance targeting the situation where a specific interval of an optimal solution arrives first. We establish the interesting property of our algorithm that it performs worst when the input stream consists solely of a set of independent intervals. It then remains to analyse the algorithm on these simple instances.

\vspace{0.1cm}

Our lower bound is proved via communication complexity arguments, similar in spirit to the robust communication lower bounds established by [Chakrabarti et al., Theory Comput. 2016].
\end{abstract}

\newpage

\section{Introduction}
In the one-pass streaming model of computation, an algorithm is presented with a stream of input items that, a priori, arrive in arbitrary order. The algorithm is tasked with computing a solution while maintaining a memory of size sublinear in the length of the input stream. 

We consider the \textsf{Unit Interval Selection} problem in the streaming model. In this problem, the input stream consists of $n$ closed unit-length intervals $\mathcal{S}$ on the line, presented to the algorithm in arbitrary order, and the objective is to compute a pairwise non-overlapping subset $\mathcal{I} \subseteq \mathcal{S}$ of largest cardinality.

Emek et al. \cite{ehr16} initiated the study of \textsf{Unit Interval Selection} in the one-pass streaming model.
They gave a $2/3$-approximation algorithm that uses $O(\vert OPT \vert)$ words\footnote{A {\em word} is the amount of space required to store a single interval.} of space, where $OPT$ denotes an optimal solution, and showed that $\Omega(n)$ bits of space are necessary for algorithms with even slightly better approximation guarantees, which renders their algorithm space optimal.
Subsequently, Cabello and P{\'{e}}rez{-}Lantero \cite{cp17} gave a simpler algorithm with the same guarantee. We emphasize that these results hold for streams that are in adversarial/arbitrary order.

Given that the adversarial order setting is fully understood, in this paper, we ask whether improved algorithms are possible if the input stream is in uniform random order. Indeed, random order streams have received significant attention, and, for many problems, improvements beyond known lower bounds that hold for adversarial orderings are known, such as \textsf{Quantile Estimation} \cite{gm09}, \textsf{Maximum Matching} \cite{k13,a23,ab21}, \textsf{Bounded Degree Property Testing} \cite{mmps17}, and edge-arrival \textsf{Set Cover} \cite{kka23}. Lower bounds for random order streams were pioneered by Chakrabarti et al. \cite{cjp08}, with subsequent work \cite{ccm08} further developing these ideas. The ideas from \cite{ccm08} have since been employed in various random order streaming lower bound proofs, including \cite{ab21} and \cite{as23}.  The random order setting, which assumes worst-case instances and random order arrivals, is a less pessimistic setting than the arbitrary order setting, which assumes both worst-case instances and worst-case orderings, and is therefore more relevant from a practitioner's point of view.

\subsection{Our Results} 
As our main result, we show that we can go beyond the $2/3$-approximation barrier for adversarial order streams when considering a uniform random arrival order.

\begin{restatable}{theorem}{ALG}\label{thm: alg}
  There is a deterministic one-pass $O(|OPT|)$ words of space streaming algorithm for \textsf{Unit Interval Selection} on random order streams with expected approximation factor $0.7401$, where the  expectation is taken over the random order of the stream.
\end{restatable}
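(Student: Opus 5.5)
The plan follows the outline in the abstract. I will first design an algorithm for a restricted domain $[0,\Delta)$ with $\Delta$ a large constant, and then lift it to the whole line with the standard shifting/gridding technique. For the lift, choose a random offset $r$ and partition the line into blocks $[r+k\Delta, r+(k+1)\Delta)$. Discard every interval that crosses a block boundary and run an independent copy of the restricted algorithm on each nonempty block. A fixed optimal interval crosses a boundary with probability $O(1/\Delta)$, so the optimum restricted to block interiors is at least $(1-O(1/\Delta))|OPT|$. To make the algorithm deterministic, run all $\Delta$ integer shifts (or a constant number of them) in parallel and output the best; the expected approximation loses only a $1-O(1/\Delta)$ factor. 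The space is $O(|OPT|)$ because only blocks containing intervals are instantiated. Each nonempty block has local optimum at least $1$ and at most $\Delta$, and each copy stores $O(\Delta)$ intervals, which is $O(\text{local }OPT)$. Two further points need checking: the substream arriving to each block is itself in uniform random order, and the expectation is linear over blocks.

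On a block $[0,\Delta)$ the local optimum has size at most $\Delta$. I will run $O(\Delta)$ recursive instances. Each instance guesses that a particular position, that is, a particular interval of the optimal solution, is the first optimal interval to arrive. Once an interval $I$ has been committed, the instance splits the domain into the part left of $I$ and the part right of $I$ and recurses on each. Throughout, it also maintains the Emek et al./Cabello--P\'erez-Lantero style $2/3$-approximate structure on each part, so that in the worst case it keeps at least $2/3$. Random order is what yields the improvement. The first interval to arrive in a region is a uniformly random input interval there, and with good probability it is "tight", meaning it sits in an optimal-compatible position, so committing to it loses nothing. The output is the best combination over the instances.

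The analysis has two steps. (i) A reduction lemma: the algorithm's expected ratio is minimized when the input consists solely of a set of pairwise disjoint intervals (independent intervals). I would prove this with an exchange/coupling argument. Take any instance and its optimal solution $O$; every non-optimal interval can only help, because it either arrives later and is ignored, or it arrives earlier and serves as a committed interval that still permits a recursive solution at least as good as the one that $O$-intervals alone would give in the corresponding subproblem. This is carried out by induction on the recursion, comparing the random order of $O\cup X$ against the induced random order on $O$. (ii) On independent instances of $k\le\Delta$ disjoint intervals, the quantity to compute is a recursion over the random arrival order. Let $f(k)$ be the expected number recovered; then $f(k)$ equals $1$ plus the average over the first arriving position $i$ of $f(i-1)+f(k-i)$, adjusted by the loss incurred when gaps are smaller than $1$. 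Solving this numerically, or bounding it by a linear program, gives the constant $0.7401$.

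I expect the main obstacle to be step (i), the monotonicity/worst-case lemma. Non-optimal intervals can block or shift the recursion in ways that appear harmful, so the coupling has to charge each such event to a gain elsewhere. If a direct coupling fails, I would define a potential that measures the "slack" in each subproblem and show by induction that extra intervals never decrease it in expectation.
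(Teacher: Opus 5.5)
\textbf{Verdict: genuine gap.} Your lifting step is sound and is essentially the paper's: run every integer shift of the $\Delta$-grid and lose a factor $(\Delta-1)/\Delta$. Your count of nonempty blocks via $\sum_{\text{blocks}}\alpha(\text{block})\le |OPT|$ is a clean way to get the space bound.

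\textbf{Where it fails: the restricted-domain algorithm.} This is where the theorem actually lives. The only concrete mechanism you give is to commit to the first interval arriving in a region and recurse strictly left and right of it. That mechanism makes your step (i) false.
\begin{itemize}
\item Whenever two consecutive optimal intervals $opt_j, opt_{j+1}$ are at distance less than $1$, some non-optimal unit interval $X$ overlaps both.
\item If $X$ arrives first and is committed, you get at most $1+(j-1)+(\alpha-j-1)=\alpha-1$.
\item On the induced substream of $OPT$ alone, the same rule commits to an optimal interval and, splitting exactly around it, recovers all $\alpha$.
\end{itemize}
So your claim that an early non-optimal interval ``still permits a recursive solution at least as good'' is wrong. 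With many such straddlers, as in the clique-plus-wings lower-bound instance, the first arrival is almost never ``tight'', even in expectation.

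This failure is forced, not incidental. A rule that is exact on independent instances and also has them as its worst case would be exact on every instance with constant space per block, contradicting the paper's $8/9$ lower bound. Separately, your recurrence in (ii) (``adjusted by the loss when gaps are smaller than $1$'') is never pinned down, so $0.7401$ is asserted rather than derived.

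\textbf{The missing idea: monotone state.} Every piece of state must be monotone under adding intervals, which rules out first-arrival commitments. The paper does this as follows.
\begin{itemize}
\item It fixes all integer split points $i$ in advance.
\item It keeps $R_i$ equal to the \emph{current} leftmost interval inside $[i,b)$, and symmetrically $L_i$ as the current rightmost inside $[a,i)$.
\item It feeds $\mathcal{A}^R_i$ only intervals independent of and to the right of the current $R_i$, and feeds $\mathcal{T}^L_i$ every interval inside $[a,i)$.
\item It outputs the best of $OUT(\mathcal{T}^L_i)\cup R_i\cup OUT(\mathcal{A}^R_i)$ and its mirror image, over all $i$.
\end{itemize}
Under a superstream, $R_i$ is at every moment at least as far left, so every child receives a superstream. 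Monotonicity then follows by plain induction on $b-a$ for every fixed order; no coupling or potential function is needed.

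\textbf{The recurrence that gives $0.7401$.} On independent instances, suppose $opt_i=[f,f+1]$ arrives first and take $l=\lfloor f\rfloor$. Then $R_l=opt_i$, $\mathcal{A}^R_l$ receives exactly $opt_{i+1},\dots,opt_\alpha$ in random order, and $\mathcal{T}^L_l$ loses at most $opt_{i-1}$. The mirror case uses $r=\lfloor f+2\rfloor$. This yields
$out(\alpha)\ge 1+\frac{1}{\alpha}\sum_{i=1}^{\alpha}\max\{out(i-1)+out(\alpha-i-1),\,out(\alpha-i)+out(i-2)\}$.
The loss is exactly one interval on one side. It is this specific recurrence, with the domain padded to $[-1,\Delta+1)$ and $\Delta=5000$, that produces $0.7401$.
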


We then complement our algorithm with the following  limitation result:

\begin{restatable}{theorem}{INDEXLB}\label{thm: exp INDEX LB}
  Let $\epsilon>0$ be any small constant. Then: 
  \begin{enumerate}
      \item Any randomized one-pass streaming algorithm for random-order \textsf{Unit Interval Selection} with expected approximation factor $8/9+\epsilon$ must use $\Omega(n)$ bits of space.
      \item For any $\delta > 0$, any randomized one-pass streaming algorithm for random-order \textsf{Unit Interval Selection} with approximation factor $2/3+\delta$ that succeeds with probability at least $2/3+\epsilon$ on any input instance must use $\Omega(n)$ bits of space.
  \end{enumerate}  
  The expectation in (1) and the probability in (2) are taken over both the stream order and the random choices of the algorithm.
\end{restatable}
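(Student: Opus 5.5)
We prove both parts by reduction from the two-party one-way \textsf{Index} problem, in the random-partition style of Chakrabarti et al.\ \cite{ccm08}. The starting point is the adversarial-order hard instance of Emek et al.\ \cite{ehr16}. Alice holds $x\in\{0,1\}^k$ and encodes it by $k$ gadgets placed far apart on the line. In gadget $i$ she inserts a middle unit interval at one of two nearby positions depending on $x_i$. Bob holds an index $j$ and inserts two unit intervals around gadget $j$, positioned so that together with the correct middle interval they form three disjoint intervals, while with the wrong middle interval only two fit. Every other gadget contributes a fixed amount to $OPT$ that is easy to achieve. The point of the construction is that deciding whether $|OPT|$ reaches the ``three'' value for gadget $j$ amounts to learning $x_j$. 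We then pad the instance: $n$ grows by $\Theta(k)$, and the padding makes the optimum of the remaining gadgets dominate or not, depending on which part we are proving.

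For part (2) we scale the instance so that the single gadget $j$ carries essentially all the approximation slack. Concretely, the other gadgets contribute nothing that can help distinguish a $2/3+\delta$ solution, for instance by making each of them trivially solvable by a solution of size $1$ while gadget $j$ is blown up through many disjoint copies of Bob's pair. This requires care. The cleaner plan is to take $m=\Theta(1/\delta)$ independent index instances sharing the gadget structure, so that success on the whole instance forces success on a constant fraction of the critical gadgets.

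The random order is handled by the \cite{ccm08} simulation. The stream is a uniformly random permutation of all intervals. With constant probability, the relevant event occurs: the two critical intervals of gadget $j$ arrive after the middle interval of gadget $j$, and, in the appropriate sense, after a constant fraction of Alice's intervals. Conditioned on this, the players can jointly simulate the stream. Alice, using public randomness, samples the random positions and emits her intervals in random order. Bob's intervals, together with Alice's tail, are handled by Bob. To make this legal we let Alice send her prefix, of which a constant fraction is hers, and Bob append the rest. Bob must not need Alice's tail, so we choose the instance such that only Alice's items carry information and Bob's items are few and known to Bob. We then use the standard trick that a uniformly random order restricted to the event ``Bob's $O(1)$ items lie in the last constant fraction'' is still a random order of Alice's items followed by a random interleaving that Bob can generate once he receives the memory state together with the identities of Alice's remaining items. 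The obstacle is that Bob does not know those remaining items. We resolve it as in \cite{ccm08}: Alice's tail items are generated from public randomness and made independent of $x$ by choosing, in random positions, dummy gadgets that Bob can fill himself. Only a random constant fraction of the gadgets encode $x$, and $j$ is among them with constant probability.

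For part (1) the same event has probability $q$ bounded away from $0$ and $1$. On the complementary event, where Bob's critical pair arrives before the middle interval, a trivial algorithm already succeeds. The expected ratio is therefore at most $q\cdot(2/3)+(1-q)\cdot 1$ unless the algorithm solves \textsf{Index}. Computing $q$ for three intervals with the required relative order, namely that the middle interval arrives first among the three, gives $q=1/3$, and hence $1/3\cdot 2/3+2/3=8/9$. So any algorithm with expected factor $8/9+\epsilon$ must, conditioned on the bad event, beat $2/3$ with probability $\Omega(\epsilon)$. For this the critical gadget must dominate $OPT$, which we ensure by using $\Theta(1)$ critical gadgets sharing the index structure with overwhelming weight. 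By the Index lower bound under constant advantage, the memory is then $\Omega(k)=\Omega(n)$.

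The main obstacle is the simulation step: ensuring that the random order of the full stream can be generated by Alice and Bob without Bob knowing $x$-dependent items that arrive late. At the same time the critical gadget must dominate $OPT$, so that the approximation threshold translates exactly into the $1/3$ probability calculation.
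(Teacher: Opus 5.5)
\textbf{There is a genuine gap, and it sits in the hard instance itself.}

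You place Alice's $k$ gadgets far apart on the line. Then every gadget contributes to $OPT$, so $|OPT| = \Theta(k) = \Theta(n)$, and the critical gadget $j$ is diluted. Keeping one interval per gadget already gives ratio $k/(k+2)\to 1$ without ever learning $x_j$. So neither the $2/3$ nor the $8/9$ threshold is tied to gadget $j$. Any hardness on such instances comes from having to store $\Theta(n)$ output intervals, not from \textsf{INDEX}, and says nothing about $O(|OPT|)$-space algorithms.

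Your patches are never made precise, and none would reproduce the exact constants in the statement. These are the padding, blowing up gadget $j$ with copies of Bob's pair, $\Theta(1/\delta)$ index instances, and $\Theta(1)$ critical gadgets with overwhelming weight. Also, Bob cannot position his pair around ``the correct'' middle interval, since he does not know $x_j$.

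The missing idea is to put all of Alice's intervals into a single clique, $I[i] = [\frac{i}{t+1}+\frac{X[i]}{t^2},\, 1+\frac{i}{t+1}+\frac{X[i]}{t^2}]$, all sharing a common point. Place the wings $J_L, J_R$ so that they avoid $I[A]$ in either of its two positions, while every other $I[i]$ meets one of them. Then $|OPT| = 3$ on every instance, and $\{J_L, I[A], J_R\}$ is the only independent set of size $3$. Any solution beating $2/3$ must therefore contain $I[A]$, and its position reveals $X[A]$. No padding or domination argument is needed.

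\textbf{The order simulation and the part (2) threshold.} Your closing idea for the simulation is the right one: late items are filled in by Bob from public randomness. But conditioning the order on an event, as you suggest earlier, would void the algorithm's guarantee, which holds only for a uniformly random order. The paper instead samples a shared uniform permutation. Alice builds exactly the clique intervals that precede the first wing, using $X$, and Bob builds the rest from public uniform bits $Y$. Since $X$ and $Y$ are i.i.d.\ uniform and independent of the permutation, the resulting instance is independent of the order, so the stream is genuinely in random order.

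This is also where the $2/3$ success-probability threshold of part (2) comes from; your plan attributes it to approximation slack instead. The event $\mathcal{E}_1$ that $I[A]$ precedes both wings has probability $1/3$. Success probability $2/3+\epsilon$ then gives $2/3+\epsilon \le \frac13\Pr[\mathcal{E}_2\mid\mathcal{E}_1]+\frac23$, i.e.\ $\Pr[\mathcal{E}_2\mid\mathcal{E}_1]\ge 3\epsilon$. Bob reads $X[A]$ off the output in that case and guesses randomly otherwise, so he wins \textsf{INDEX} with probability $\frac12+\Omega(\epsilon)$.

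Part (1) then follows from part (2) by averaging: an expected ratio of at least $8/9+\epsilon$ forces ratio above $2/3+\epsilon$ with probability above $2/3+\epsilon$. Your $q=1/3$ computation for part (1) is the right intuition, but it only becomes a proof once the instance has $|OPT|=3$.
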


We note that, in the previous theorem, (1) immediately follows from (2), however, the result stated in (1) is more suitable for comparison to our algorithmic result. Furthermore, the result in (2) explains why our algorithm can only go beyond the $2/3$-approximation barrier that holds for adversarial order streams {\em in expectation}, and not with high constant probability.

Our algorithmic result combined with our lower bound result show that the optimal achievable approximation ratio lies in the interval $[0.7401, 0.\overline{8}]$, and we leave it as a great open question to further narrow this gap.

\subsection{Techniques}
We will now discuss the main ideas behind both our algorithm and our lower bound.

\paragraph{The Algorithm.} We make use of an established technique that, given a one-pass streaming algorithm for unit-length intervals that are all contained in the restricted domain $[0, \Delta)$, for some constant integer $\Delta$, and has approximation factor $\alpha$, produces an algorithm for unrestricted domains with only a constant factor blow-up in space and a factor $(\Delta-1) / \Delta$ blow-up in the approximation factor \cite{ddk23}. We thus see that it is enough to design an algorithm for the restricted domain $[0, \Delta)$, for some large but constant $\Delta$, large enough so as to minimize the impact of the factor $(\Delta-1) / \Delta$ on the final approximation factor. 

To this end, as a key building block of our algorithm, we employ the following strategy: We maintain the left-most interval $I_L$ observed thus far in the stream, and feed the substream of intervals that lie to the right of $I_L$ into a recursive instance of our algorithm. This building block then returns the interval $I_L$ together with the solution produced by the recursive call. To see the strength of this method, denote by $OPT$ an arbitrary but fixed optimal solution. Furthermore, assume that the left-most interval $opt_1 \in OPT$ arrives before any of the other intervals in $OPT$. In this case, the algorithm would have either picked $opt_1$ as the interval $I_L$ or an interval that is even further left as $opt_1$, and the substream fed into the recursive call of the algorithm contains $OPT \setminus opt_1$, i.e., an independent set that is only by one smaller. In other words, the algorithm has acted optimally thus far since one interval $I_L$ was selected and a subinstance that contains an independent set of size $|OPT|-1$ was created whose solution can be combined with the selected interval $I_L$.

Unfortunately, the event that $opt_1$ appears first among the intervals of $OPT$ has a probability of only $1/|OPT|$ (over the random order of the stream). Our aim, however, is to make this approach work, even if any other interval of $OPT$ is the first to arrive among the intervals in $OPT$. First, by symmetry, we see that the approach above also works when the right-most interval of $OPT$ arrives first. Hence, suppose now that the $i$-th (counted from left to right) optimal interval $opt_i \in OPT$ arrives before any of the other optimal intervals, for some $1 < i < |OPT|$. Furthermore, let $\ell$ be the largest integer strictly smaller than the left boundary of $opt_i$, and let $r$ be the smallest integer strictly larger than the right boundary of $opt_i$. Then, we construct two solutions and pick the larger one, as follows. For solution 1, we combine the outputs of two independent runs of our algorithm, one on the domain $[0,\ell)$, and one on the domain $[\ell, \Delta)$. While these runs ignore intervals that intersect with the point $\ell$ and may thus potentially miss a crucial optimal interval that intersects this point, we can now exploit the fact that the interval $opt_i$ is the left-most optimal interval in the domain $[\ell, \Delta)$. Hence, as discussed above, the run on $[\ell, \Delta)$ does not make any mistake when selecting the interval $opt_i$ as the leftmost interval (or an interval within $[\ell, \Delta)$ that lies even further to the left as $opt_i$) into the final solution. For solution 2, we combine the outputs of the two runs of the algorithm on the domains $[0, r)$, and $[r, \Delta)$, respectively, and now exploit the fact that $opt_i$ is the right-most optimal interval in the domain $[0, r)$.

Last, since our algorithm cannot know which interval of $OPT$ will indeed arrive first, we run the above strategy for every integer $1 \le i \le |OPT|$. We observe that this still yields an algorithm with only constant space complexity since the domain $[0, \Delta)$ is of constant size, which further implies that the optimal solution within this domain is also only of constant size $|OPT| \le \Delta - 1$. 

In our analysis, we establish the monotonicity property that adding intervals to the input stream never decreases the size of the output produced by the algorithm.
In other words, the worst-case performance of our algorithm is achieved when the algorithm is run on an independent set -- a special case that could of course be solved optimally by just reporting these intervals. We can thus focus on analysing the performance of the algorithm on this special case. Due to the many recursive calls of the algorithm, we obtain a recursive formula for the solution size obtained. We observe that the approximation factor $\alpha$ of our algorithm decreases as $\Delta$ increases. Recall, however, that we still need to apply the technique mentioned above to go from the restricted domain $[0, \Delta)$ to an unrestricted domain, which incurs a further loss in the approximation factor by a factor of $(\Delta-1) / \Delta$. Using a computer programme, we see that, for $\Delta = 5000$, the final algorithm has an approximation factor of at least $0.7401$, which is provably very close to the best possible choice.
This specific choice of $\Delta$ is discussed further in Appendix \ref{app: different gamma}

\paragraph{The Lower Bound.} 
Our lower bound is obtained via a reduction to the $\textsf{INDEX}_t$ problem in the one-way two-party communication complexity setting, by combining the clique gadgets already used in \cite{ehr16} and \cite{cp17} for adversarial order lower bounds with the random order lower bound ideas by Cormode et al. \cite{ccm08}. 

In the $\textsf{INDEX}_t$ problem, there are two players, denoted Alice and Bob.
Alice holds a vector $X \in \{0,1\}^t$ consisting of $t$ uniform independent random bits, and Bob is given a uniform random index $A \in [t]$. Alice sends a message to Bob and Bob outputs the bit $X[A]$.
The objective is to obtain a communication protocol that communicates a message of smallest possible size. It is known that $\textsf{INDEX}_t$ requires a message of size $\Omega(t)$. Since streaming algorithms can be used as one-way two-party communication protocols (Alice runs the streaming algorithm on her part of the input and sends the memory state to Bob, who then completes the execution of the algorithm on his part), lower bounds on the message size translate and yield lower bounds on the space used by streaming algorithms. 

The $\textsf{INDEX}_t$ problem can be reduced to \textsf{Unit Interval Selection} as follows.
Using the bits in $X \in \{0, 1\}^t$, Alice constructs a \textit{clique} of $t$ mutually overlapping intervals, i.e., a stack of intervals such that, for every $i > 1$, the $i$-th interval is slightly further to the right than the $(i-1)$-th interval. The bits $X[i]$ are encoded in this construction in that, depending on the value of $X[i]$, the $i$-th interval is ever so slightly moved, which, however, does not affect the logic behind the construction. Crucially, an algorithm that outputs the $j$-th interval of the clique stack, for any $j$, must know the value $X[j]$. Then, based on the index $A$, Bob constructs two {\em wing intervals}, which surround, but are independent of, the interval corresponding to $X[A]$.
The key property is that every clique interval, other than the one corresponding to $X[A]$, must intersect exactly one of these wing intervals, or, in other words, the only independent set of size $3$ consists of the wing intervals together with the $A$-th interval of the stack. This in turn implies that any algorithm on this construction that computes a better than $2/3$-approximation must report the interval corresponding to $X[A]$, and its location allows recovering the bit $X[A]$.

When bringing this construction into the random-order setting, for the resulting construction to be hard it is necessary that the two wing intervals appear after the $A$-th clique interval since otherwise the positions of the wing intervals reveal the index $A$ and an algorithm could then simply collect the $A$-th clique interval when it arrives, obtain an optimal solution to the interval selection problem, and solve the underlying $\textsf{INDEX}_t$ instance. We, however, observe that, under a uniform random arrival order, with probability $1/3$, the two wing intervals indeed arrive after the $A$-th clique interval, and we prove that this case indeed remains hard to solve. Our expected approximation factor lower bound of $8/9$ is then explained by the fact that, with probability $1/3$, an algorithm can only obtain a solution of size $2$, and with probability $2/3$, an algorithm can obtain a solution of optimal size $3$, which results in an expected solution size $1/3 \cdot 2 + 2/3 \cdot 3 = 8/3$, or an approximation factor of $(8/3) / 3 = 8/9$.

On a technical level, our approach is a reduction via the use of shared public randomness in the spirit of the argument from \cite{ccm08}.
Alice and Bob  sample a shared random permutation of the intervals that are used in the hard instance.
Alice constructs each clique interval which arrives before a wing interval via the bits in $X$, and then Bob constructs the wing intervals via $A$, and also the remaining clique intervals via sampling public random variables.
This results in a construction that reflects the bit $X[A]$ of the input to $\textsf{INDEX}_t$ precisely when the interval corresponding to $X[A]$ arrives before either wing interval, and this occurs with probability $1/3$, as desired, which yields our lower bound results. 

\subsection{Further Related Work}
Besides unit-length intervals, intervals of arbitrary lengths have also been considered. Emek et al. \cite{ehr16} showed that, in adversarial order streams, a $1/2$-approximation for arbitrary-length intervals can be achieved with space $O(|OPT|)$, and they also proved that algorithms with improved approximation factor must use space $\Omega(n)$. Cabello and P\'{e}rez-Lantero \cite{cp17} gave a simpler algorithm for this case. More recently, Dark et al. \cite{ddk23} explored algorithms for weighted interval streams and streams that allow for deletions. In a different line of work, the problem of estimating the size of a largest disjoint set of intervals has also been considered \cite{cp17,bcw20}. Besides intervals, other geometric objects, such as squares and rectangles \cite{cdk19}, and segments and 2-intervals \cite{bko22}, have also been explored.

\textsf{Interval Selection} has also been studied in the streaming sliding window model, where the objective is to maintain a solution on the $L$ most recent intervals. Alexandru and Konrad \cite{ak24} showed that \textsf{Interval Selection} is harder in the sliding window model than in the one-pass streaming setting in that, for both unit-length intervals and arbitrary-length intervals, algorithms cannot achieve the same approximation guarantees as in the streaming setting when using space $O(|OPT|)$.

\subsection{Outline}
We provide definitions and state the hardness of the $\textsf{INDEX}_t$ communication problem that we use for our lower bound result in our preliminaries section, Section~\ref{sec: prelims}. Then, in Section~\ref{sec: alg}, we present and analyse our algorithm, and in Section~\ref{sec: LBs}, we give our lower bound result. Finally, we conclude in Section~\ref{sec: conclusion}.

We further discuss the exact approximation factor achieved in Appendix \ref{app: different gamma}.

\section{Preliminaries}\label{sec: prelims}

We use $n$ to denote the length of the input  stream and $[k]$ to denote the set $\{0,\dots,k-1\}$.

\subsection{Interval Selection}

In this work, we consider closed intervals $I = [a, b]$, where $a < b$. We mostly focus on intervals of {\em unit length}, i.e., when $b = a +1$. We say that $a$ is the {\em left endpoint} of $I$ and $b$ is the {\em right endpoint} of $I$. We say that an interval $I=[a_I, b_I]$ is {\em further left} than an interval $J = [a_J, b_J]$ if its left endpoint is strictly smaller, i.e., $a_I < a_J$ holds. Similarly, $I$ is {\em further right} of $J$  if its right endpoint is strictly larger, i.e., $b_I > b_J$ holds.

Two intervals $I, J$ are {\em independent} if $I \cap J = \varnothing$, and a set of intervals $\mathcal{I}$ is an {\em independent set} if the intervals contained in $\mathcal{I}$ are  pairwise independent.
The \textsf{Unit Interval Selection} problem asks for an independent set of intervals  of largest size, where we also assume that all intervals have length $1$.

We use $OPT$ to denote an arbitrary but fixed independent set of largest size, and we write $\alpha = \alpha(\cdot)$ to denote the size of a largest independent set, where the parameter passed on to $\alpha(\cdot)$ is either a stream or a set or intervals. For a parameter $0 < \beta \le 1$, we say that an independent set $\mathcal{I}$ is a {\em $\beta$-approximation} if $\vert\mathcal{I}\vert\geq \beta \cdot \vert OPT \vert$.

We also assume that any interval can be stored in a single word of space.
Under the plausible assumption that the endpoint can be encoded in $O(\log n)$ bits of space, this yields $O(\log n)$ space overhead.

\subsection{Communication Complexity}

We use the standard one-way two-party communication complexity setting, first introduced by \cite{y79}.
Our lower bounds will then be derived from a reduction to the well known $\textsf{INDEX}_t$ problem.
We refer the reader to \cite{ry20} for a comprehensive overview of communication complexity.

In this setting, we have two parties that we denote by Alice and Bob.
They each hold a portion of the input, $X_A$ and $X_B$, respectively, and their goal is to compute/approximate a function $f(X_A, X_B)$ on their inputs. To this end,
Alice sends a single message to Bob, $M(X_A)$, and after receiving the message, Bob outputs $f(X_A, X_B)$ or a suitable approximation thereof.
Then, the size of Alice's message is the relevant notion of complexity, and the goal is to send a message that is as small as possible. Alice and Bob can make use of infinite sequences of both private and public/shared random bits. 

\begin{definition}
For an integer $t$, $\textsf{INDEX}_t$ is the one-way two-party communication game where Alice holds a vector $X\in\{0,1\}^t$ and Bob holds an index $A\in[t]$. Alice sends a single message to Bob who then outputs the value $X[A]$.
\end{definition}
Let $\mu$ denote the uniform distribution over $\textsf{INDEX}_t$ instances. This is the distribution where $X$ is sampled uniformly at random from $\{0,1\}^t$ and $A$ is sampled independently from $X$ and uniformly at random from its support $[t]$.

It is well-known that solving $\textsf{INDEX}_t$ requires a message of size $\Omega(t)$, even in a distributional sense when the input is chosen from $\mu$, and also regardless of any public randomness used. 

\begin{theorem}[$\textsf{INDEX}_t$ Hardness (e.g., \cite{ry20})]\label{thm: INDEX hardness}
  Let $\delta > 0$ be any constant.
  Let $P$ be a one-way communication protocol for $\textsf{INDEX}_t$.
  Let $R$ denote any public randomness used by $P$.
  If 
  \[
    \mathbb{P}_{\mu, R}(\text{$P$ outputs the correct answer}) \geq 1/2 + \delta,
  \]
  then $P$ must send a message of size $\Omega(t)$ bits.
\end{theorem}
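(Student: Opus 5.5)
The statement is the classical distributional lower bound for $\textsf{INDEX}_t$, and I would prove it by an information-theoretic argument after first removing the randomness. \textbf{Step 1 (derandomization).} The success probability $\mathbb{P}_{\mu,R}(\cdot)\ge 1/2+\delta$ is an average over the public randomness $R$. So there is a fixed value $r$ of $R$ for which $\mathbb{P}_{\mu}(\text{$P_r$ correct})\ge 1/2+\delta$. Alice's private coins can be fixed the same way, so $P_r$ may be taken to be a protocol with a deterministic message $M=M(X)$ of length at most $m$ bits. Bob's private randomness is independent of $(X,M,A)$ and will only be used through data processing. It therefore suffices to show $m=\Omega(t)$ for such a protocol.

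\textbf{Step 2 (per-coordinate success).} Because $A$ is uniform and independent of $X$, for each $i\in[t]$ let $p_i$ denote the probability that Bob outputs $X[i]$ when $A=i$. On this event Bob's output is a (randomized) function of $M$ alone. Then $\frac1t\sum_i p_i\ge 1/2+\delta$. Set $e_i=1-p_i$, so that $\bar e:=\frac1t\sum_i e_i\le 1/2-\delta$. By Fano's inequality for a binary variable, $H(X[i]\mid M)\le h(e_i)$, where $h$ is the binary entropy function. (If $e_i>1/2$ this bound is still valid, since $H(X[i]\mid M)\le 1$ and $h(e_i)=h(1-e_i)$; in any case $H(X[i]\mid M)\le h(\min(e_i,1-e_i))\le h(e_i)$.) Hence
\[
I(X[i];M)\ \ge\ 1-h(e_i).
\]

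\textbf{Step 3 (summing up).} Since the bits $X[i]$ are independent, superadditivity of mutual information gives
\[
m\ \ge\ H(M)\ \ge\ I(X;M)\ \ge\ \sum_{i} I(X[i];M)\ \ge\ \sum_i \bigl(1-h(e_i)\bigr).
\]
Concavity of $h$ and Jensen's inequality give $\frac1t\sum_i h(e_i)\le h(\bar e)$. Since $h$ is increasing on $[0,1/2]$ and $\bar e\le 1/2-\delta$, we get $h(\bar e)\le h(1/2-\delta)$. Therefore
\[
m\ \ge\ t\bigl(1-h(1/2-\delta)\bigr)=\Omega(\delta^2 t),
\]
which is $\Omega(t)$ for constant $\delta$.

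The main point needing care is Step 3. A naive per-index Fano bound fails for indices with error above $1/2$, so the averaging must go through concavity of $h$ as above. One must also keep $A$ independent of $(X,M)$, so that the average success over $A$ really decomposes into the per-coordinate quantities $p_i$. Everything else is standard.
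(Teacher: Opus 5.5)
The paper gives no proof of this statement; it quotes it as a standard fact and points to \cite{ry20}. Your self-contained argument is correct.

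You take the usual information-theoretic route:
\begin{enumerate}
\item fix the public coins by averaging;
\item apply binary Fano to each coordinate, using that $A$ is independent of $(X,M)$ so the average success splits into the $p_i$;
\item use superadditivity $I(X;M)\ge\sum_i I(X[i];M)$ for independent bits;
\item finish with Jensen on the concave binary entropy $h$.
\end{enumerate}
Compared with citing the result, this buys an explicit quantitative bound, $m\ge t\bigl(1-h(1/2-\delta)\bigr)\ge \frac{2}{\ln 2}\,\delta^2 t$. A combinatorial alternative also exists: bound how many $X$ can lie within Hamming distance $(1/2-\delta)t$ of one of Bob's at most $2^m$ answer vectors. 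That avoids entropy but is no simpler and gives the same order.

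A few presentational points, none of which affects correctness:
\begin{itemize}
\item Binary Fano, $H(X[i]\mid \hat X_i)\le h(e_i)$, holds for every $e_i\in[0,1]$, so your parenthetical about $e_i>1/2$ is unnecessary. Note also that $h(\min(e_i,1-e_i))=h(e_i)$ is an equality, not an inequality.
\item The real reason Jensen is needed is not that Fano fails for large errors. It is that $1-h(e)$ is not monotone on $[0,1]$, so you cannot feed the average error into a per-coordinate bound directly.
\item $H(M)\le m$ presumes at most $2^m$ distinct messages. For variable-length messages of at most $m$ bits you get $m+1$, which is immaterial.
\item Fixing Alice's private coins is harmless but not needed. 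The chain $m\ge H(M)\ge I(X;M)\ge\sum_i I(X[i];M)$ and the Fano step work unchanged for a randomized $M$ that is independent of $A$ and of Bob's coins.
\end{itemize}
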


\section{Random-Order Unit Interval Selection Algorithm} \label{sec: alg}

We now prove Theorem \ref{thm: alg}. To this end, we first present our algorithm $\mathcal{A}$ (see Algorithm~\ref{alg: alg}), and then analyse it.

\begin{algorithm}
    \caption{Algorithm $\cA$}\label{alg: alg}
    \textbf{Input:} A domain $\cD = [a,b)$, where $a,b \in \bZ$, and $a < b$
    \newline
    \textbf{Initialisation:}
        \begin{algorithmic}[1]
        \State // the recursive instances corresponding to each split point
        \For{\textbf{each} integer $i \in \{a+1,\dots,b-1\}$}
            \State // Initialise the left and right most intervals closest to each integer position
            \State $L_i \gets \emptyset$
            \State $R_i \gets \emptyset$
            \State // Initialise the recursive instances to the left of each integer position
            \State $\cA^L_i \gets \text{new instance $\cA([a, i))$}$
                \State $\cT^L_i \gets \text{new instance $\cA([a, i))$}$
            \State // Initialise the recursive instances to the right of each integer position
            \State $\cA^R_i \gets \text{new instance $\cA([i, b))$}$
                \State $\cT^R_i \gets \text{new instance $\cA([i, b))$}$
        \EndFor
    \end{algorithmic}
    \vspace{2mm}
    \hrule
    \vspace{2mm}
  
    \textbf{During the Stream:}
    \newline
    \textbf{Input:} A stream $\cS$ of unit-length intervals, each of which is fully contained in $[a,b)$.
    \begin{algorithmic}[1]
        \For{\textbf{each} interval $I \in \cS$}
            \For{\textbf{each} integer $i \in \{a+1,\dots,b-1\}$}
                \State // Update $R_i$ and feed $I$ into the corresponding right recursive instances
                \If{$I \subseteq [i, b)$}
                    \State Feed $I$ into $\cT^R_i$
                    \If{$R_i = \emptyset$ \textbf{or} ($R_i \neq \emptyset$ and $I$ is further left than $R_i$)}
                        \State $R_i \gets I$
                    \EndIf
                    \If{$R_i \neq \emptyset$, $I$ is independent of $R_i$, and $I$ is further right than $R_i$}
                        \State Feed $I$ into $\cA^R_i$
                    \EndIf
                \EndIf

            \State // Update $L_i$ and feed $I$ into the corresponding left recursive instances
                \If{$I \subseteq [a, i)$}
                    \State Feed $I$ into $\cT^L_i$
                    \If{$L_i = \emptyset$  or ($L_i \neq \emptyset$ and $I$ is further right than $L_i$)}
                        \State $L_i \gets I$
                    \EndIf
                    \If{$L_i \neq \emptyset$, $I$ is independent of $L_i$, and $I$ is further left than $R_i$}
                        \State Feed $I$ into $\cA^L_i$
                    \EndIf
                \EndIf
            \EndFor
        \EndFor
    \end{algorithmic}

    \vspace{2mm}
    \hrule
    \vspace{2mm}
  
    \textbf{Output:}
    \begin{algorithmic}[1]
        \State The largest set among
        \[
            OUT\left(\cT^L_i\right) \cup R_i \cup OUT\left(\cA^R_i\right)
        \quad \text{ and } \quad
            OUT\left(\cA^L_i\right) \cup L_i \cup OUT\left(\cT^R_i\right)
        \]
        for each integer $i \in \{a+1,\dots,b-1\}$
    \end{algorithmic}
\end{algorithm}

Our algorithm proceeds as follows. It is parametrised by two integers $a,b$ with $a < b$, and operates under the assumption that all intervals fed into the algorithm are fully contained in the domain $\mathcal{D} = [a, b)$. Our overall goal is to design an algorithm that operates on the domain $[0, \Delta)$, for some integer $\Delta$. To accommodate for this, we will see later that, for technical reasons, we need to invoke our algorithm with parameters $a=-1$ and $b = \Delta+1$.

Next, our algorithm considers all integers in $\{a+1, \dots, b-1\}$ as {\em split points}. For each such split point $i \in \{a+1, \dots, b-1\}$, the algorithm maintains in variables $L_i$ and $R_i$ the closest interval to $i$ that is located left of $i$ and the closest interval to $i$ that is located right of $i$, respectively. It also executes four recursive calls of the algorithm associated with $i$. The recursive calls $\mathcal{T}_i^L$ and $\mathcal{T}_i^R$ are instantiated on the domains $[a, i)$ and $[i, b)$, respectively, and all intervals that fall into these domains are fed into these algorithms. The recursive calls  $\mathcal{A}_i^L$ and $\mathcal{A}_i^R$ are also instantiated on the domains 
$[a, i)$ and $[i, b)$, however, only intervals that are left of $L_i$ (and within the domain $[a, i)$) are fed into $\mathcal{A}_i^L$, and intervals that right of $R_i$ (and within the domain $[i, b)$) are fed into $\mathcal{A}_i^R$.

For each split point $i$, two potential outputs are considered: The output generated by $\mathcal{T}_i^L$, combined with the output of $\mathcal{A}_i^R$ as well as the interval $R_i$, and the output generated by $\mathcal{A}_i^L$, combined with the output of $\mathcal{T}_i^R$ as well as the interval $L_i$. We note that, by construction, both these potential outputs form independent sets. The output of the algorithm then is the largest output of this kind when considering all split points.

In the following, we write $OUT(\cA(\cS))$ to denote the output produced by algorithm $\mathcal{A}$ when run on the input stream  $\cS$.

\subsection{Monotonicity Property}

We first prove that removing any interval from the input stream cannot increase the size of the output  regardless of the order of the stream.

\begin{lemma}\label{lem: technical}
    Let $\cA$ be an instance of Algorithm \ref{alg: alg} instantiated with integers $a < b$.
    Let $\cS$ be any input stream (not necessarily random-order) whose intervals are fully contained in $[a,b)$, and let $\cS'$ be any substream of $\cS$.
    Then $\vert OUT(\cA(\cS')) \vert \leq \vert OUT(\cA(\cS)) \vert$.
\end{lemma}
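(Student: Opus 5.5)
We argue by strong induction on the domain length $b-a$. The induction hypothesis is that for every domain of length strictly less than $b-a$, every stream $\cS$ and every substream $\cS'$ of $\cS$ satisfy $|OUT(\cA(\cS'))| \le |OUT(\cA(\cS))|$.

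The base case $b-a=1$ is immediate. There are no split points, so the output is empty regardless of the stream. (Even if one prefers to treat this degenerate case differently, a domain of length $1$ contains at most one disjoint interval, so the claim is trivial.)

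For the inductive step, it suffices to compare the candidate outputs split point by split point. Fix a split point $i$. We show that each of the two candidate sets computed on $\cS'$ has size at most the corresponding candidate, or some other candidate, computed on $\cS$. By symmetry, it is enough to treat
\[
OUT(\cT^L_i)\cup R_i\cup OUT(\cA^R_i).
\]
The substream fed into $\cT^L_i$ under $\cS'$ is a substream of the one fed under $\cS$. Both consist of all intervals inside $[a,i)$, restricted to $\cS'$ or $\cS$ respectively. Since $[a,i)$ is shorter than $[a,b)$, induction gives $|OUT(\cT^L_i(\cS'))|\le |OUT(\cT^L_i(\cS))|$. The same reasoning handles $\cT^R_i$.

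The main obstacle is $\cA^R_i$ together with $R_i$, because here the substream is not monotone. With more intervals, $R_i$ is updated to intervals further left, and the set of intervals forwarded to $\cA^R_i$ changes. It changes in two ways.

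1. Intervals that arrive before $R_i$ has reached its final value may be forwarded under $\cS'$ but not under $\cS$.
2. An interval forwarded under $\cS'$ may not be forwarded under $\cS$ because it intersects the current (more leftward) $R_i$. This second case is actually excluded: a further-left interval is less likely to intersect a right-lying interval.

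To handle this, I would compare the stream actually fed to $\cA^R_i$ with an idealized stream. Write $\cS_{>R}$ for the set of intervals that arrive after $R_i$ was last set and lie right of and independent of the final $R_i$. The key point to establish is that the forwarding rule, which forwards $I$ if it is independent of and right of the current $R_i$, forwards exactly the intervals independent of the current leftmost interval.

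The plan is an exchange argument for the case where the final $R_i$ differs between the two runs, say $R_i(\cS)=J\neq R_i(\cS')=J'$, with $J$ further left. I would bound $|OUT(\cA^R_i(\cS'))|+1$ by an inner induction on intervals. The tool is to reduce to an alternative candidate, namely a split point $i'$ with $i' \ge i$ at which the $\cS$-run keeps more. The first thing I would try is the following. If $J'$ arrives in $\cS$ before $J$, then everything forwarded to $\cA^R_i$ under $\cS'$ after $J'$ is also forwarded under $\cS$ until $J$ arrives. I would then use induction on the recursive instance with the stream prefix viewed as a substream.

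If this proves too delicate, I would instead strengthen the induction hypothesis. The stronger claim would be that the output size depends monotonically on the set of intervals $\{I : I \text{ independent of the leftmost interval seen so far}\}$ restricted to arrivals in order. I would then show that the map from $\cS$ to the substream fed into $\cA^R_i$ is monotone in the sense of substreams, up to a prefix that can only add intervals.
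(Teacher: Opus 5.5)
Your base case and your treatment of $\cT^L_i,\cT^R_i$ are fine. The proposal stalls, however, at the step that carries the lemma, and the obstacle you identify there does not exist.

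Your item 1 is false: the substream fed into $\cA^R_i$ \emph{is} monotone. Let $I$ be an interval of $\cS'$ that is forwarded to $\cA^R_i$ in the run on $\cS'$, and let $J'$ be the value of $R_i$ just before $I$ arrives in that run. Since $I$ is disjoint from $J'$ and further right, the left endpoint of $I$ exceeds the right endpoint of $J'$. Every interval of $\cS'$ preceding $I$ also precedes $I$ in $\cS$. Hence in the run on $\cS$, the value $J$ of $R_i$ just before $I$ arrives is nonempty and is a leftmost interval of a superset, so its right endpoint is at most that of $J'$ (unit lengths). Therefore $I$ does not replace $J$, and it is disjoint from and further right than $J$, so it is forwarded in the run on $\cS$ as well.

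The forwarding test only ever refers to the \emph{current} $R_i$, never the final one. So your scenario of an interval arriving ``before $R_i$ has reached its final value'' is just an instance of your item 2, which you yourself correctly rule out. The symmetric argument with $L_i$, which only moves right, handles $\cA^L_i$. This is exactly the paper's argument: at each arrival, the stored boundary interval under $\cS$ is at least as far toward the split point as under $\cS'$. Hence the stream forwarded under $\cS'$ is a substream of the one forwarded under $\cS$, and the induction hypothesis applies directly.

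Because you misdiagnose this step, everything after it is an unexecuted plan rather than a proof: the exchange argument for $J\neq J'$, the appeal to an alternative split point $i'\ge i$, the inner induction, and the strengthened hypothesis. None of it is needed. To finish, you need only two further facts.
\begin{enumerate}
\item If $R_i$ is nonempty under $\cS'$, it is nonempty under $\cS$.
\item The three parts $OUT(\cT^L_i)$, $R_i$, $OUT(\cA^R_i)$ of a candidate are pairwise disjoint. The first lies in $[a,i)$. Each interval of the third was independent of and right of the $R_i$ current at its arrival, and $R_i$ only moves left, so it is independent of and right of the final $R_i$.
\end{enumerate}
Consequently each candidate's size is a sum of three quantities, each non-decreasing when passing from $\cS'$ to $\cS$. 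Taking the maximum over all candidates gives the lemma.
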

\begin{proof}
    Our proof is via induction over the quantity $b - a$.
    We first show that the recursion tree will always have finite size.

    \begin{claim}
        For any value of $b - a > 0$, the recursion tree of $\cA$ has finite size.
    \end{claim}
    \begin{proof}
        Each recursive subproblem is instantiated before any intervals arrive. Thus, the size of the recursion tree depends only on $b - a$ and not on the input stream.
        Each recursive instance has a domain that is by one smaller than the domain of its parent, and an instance with domain size at most 1 creates no subproblems.
    \end{proof}

    We now proceed with the induction.

    \paragraph*{Base Case:}
        Suppose $\cA$ is an instance of Algorithm \ref{alg: alg} such that $b - a = 1$.
        Then $\vert OUT(\cA(\cS')) \vert \leq \vert OUT(\cA(\cS)) \vert$.
    \begin{proof}
        If $b - a = 1$, then no unit length intervals fit within $[a, b)$.
        Therefore, $\cS$ and $\cS'$ must both be empty streams, and so $\vert OUT(\cA(\cS')) \vert = \vert OUT(\cA(\cS)) \vert = 0$.
    \end{proof}

    \paragraph*{Inductive Hypothesis:}
        Suppose the lemma holds for all instances of $\cA$ such that $b - a \leq k-1$.

        We now consider when $b - a = k$.

    \paragraph*{Inductive Step:}
        By construction, the output of $\cA$ is formed by combining the outputs of its recursive children, together with either $L_i$ or $R_i$, for some $i$.
        We will first argue that if any $L_i$ or $R_i$ stores an interval under $\cS'$, then it must also store an interval under $\cS$.
        We will then argue that all of the recursive children must output solutions at least as large under $\cS$ compared to $\cS'$.
        This is sufficient to prove the claim.
 
        First, W.L.O.G., suppose some $L_i$ is non-empty under $\cS'$.
        Then, there must exist some interval in $\cS'$ which is contained in the domain $[a,i)$.
        Then, this same interval must also be contained in $\cS$ ($\cS$ is a superstream of $\cS'$), and so $L_i$ must also be non-empty under $\cS$.
        This argument holds for all relevant integers $i$, and also for each $R_i$.

        We now turn to the substreams fed into the recursive children of $\cA$, and begin with the instances $\cT_i^L$ and $\cT_i^R$.
        W.L.O.G. consider $\cT_i^L$ for some relevant integer $i$.
        Then, by construction, the size of the domain of $\cT_i^L$ is at most $k - 1$.
        Next, let $\cS'_i$ denote the substream of intervals fed into $\cT_i^L$ when $\cS'$ is fed into $\cA$, and $\cS_i$ the substream fed in when $\cS$ is fed into $\cA$.
        Then, $\cS_i'$ is precisely the substream of intervals in $\cS'$ that are fully contained in $[a,i)$.
        Then, every interval in $\cS_i'$ is also contained in $\cS'$ and so $\cS$, and therefore every interval in $\cS_i'$ is also contained in $\cS_i$.
        This gives us that $\cS_i'$ is a substream of $\cS_i$.
        We can now apply the inductive hypothesis to obtain
        \[
            \left\vert OUT\left(\cT_i^L(\cS_i')\right) \right\vert \leq \left\vert OUT\left(\cT_i^L(\cS_i)\right) \right\vert.
        \]
        The same argument applies for every relevant integer $i$, and also for each $\cT_i^R$.

        Finally, we now consider the instances $\cA_i^L$ and $\cA_i^R$.
        W.L.O.G. consider $\cA_i^L$ for some relevant integer $i$.
        Then, by construction, the size of the domain of $\cA_i^L$ is at most $k - 1$.
        Next, let $\cS_i'$ denote the substream of intervals fed into $\cA_i^L$ when $\cS'$ is fed into $\cA$, and $\cS_i$ the substream fed in when $\cS$ is fed into $\cA$.
        Then, these are each the substreams of $\cS'$ and $\cS$ of the intervals fully contained in $[a,i)$, which are also further left and independent of $L_i$ at the time each interval arrives.
        We now observe that whenever an interval in $\cS_i'$ arrives in $\cS$, the interval stored in $L_i$ must be at least as far right as when the same interval arrives in $\cS'$ - by the construction of the algorithm.
        Therefore, every interval in $\cS_i'$ is also contained in $\cS_i$, giving us that $\cS_i'$ is a substream of $\cS_i$.
        We can now apply the inductive hypothesis to obtain
        \[
            \left\vert OUT\left(\cA_i^L(\cS_i')\right) \right\vert \leq \left\vert OUT\left(\cA_i^L(\cS_i)\right) \right\vert.
        \]
        The same argument applies for every relevant integer $i$, and also for each $\cA_i^R$.

    This concludes the proof of the Lemma.
\end{proof}

\subsection{Bounding the Expected Approximation Factor}\label{subs: expected approx}

We now derive a bound on the expected approximation factor achieved by Algorithm \ref{alg: alg}. To this end, let $\Delta$ be an arbitrary but fixed integer, and we consider the performance of our algorithm when all intervals lie within the domain $[0, \Delta)$. For technical reasons, we run our algorithm $\mathcal{A}$ on the domain $[-1, \Delta+1)$ and analyse the performance of this instantiation of the algorithm.

For a set of intervals $\mathcal{I}$ that constitutes the input instance, we denote by $\Pi(\mathcal{I})$ the set of permutations of $\mathcal{I}$. Observe that the input stream then constitutes a uniform random element of $\Pi(\mathcal{I})$. 

We are interested in the expected solution size produced by the algorithm as a function of $\alpha(\mathcal{I})$, i.e., the optimal solution size.
For this purpose, for any integer $x \ge 0$, we define the following quantity:
\begin{align}
    out(x) = \min_{\text{instance $\mathcal{I}$ with $\alpha(\mathcal{I}) = x$}} \Exp_{S \sim \Pi(\mathcal{I})} |OUT(\mathcal{A}(S))| \ ,
\end{align}
i.e., the worst-case expected performance of the algorithm on instances with optimal solution size $x$.
For $x < 0$, define $opt(x) = 0$.

We first observe that it suffices to analyse the performance of our algorithm on streams consisting solely of independent intervals. 

\begin{observation}\label{obs: worst case stream}
    Let $\mathcal{I}$ be any input instance, and let $OPT \subseteq \mathcal{I}$ denote the optimal intervals in $\mathcal{I}$. Then, 
    $$\Exp_{S_{OPT} \sim \Pi(OPT)} |OUT(\mathcal{A}(S_{OPT}))| \le \Exp_{S_{\mathcal{I}} \sim \Pi(\mathcal{I})} |OUT(\mathcal{A}(S_{\mathcal{I}}))| \ . $$    
\end{observation}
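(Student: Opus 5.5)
The plan is a coupling argument that reduces the observation to Lemma~\ref{lem: technical}. Draw a uniform random permutation $S_{\mathcal{I}} \sim \Pi(\mathcal{I})$ of the full instance. Let $S_{OPT}$ be the substream of $S_{\mathcal{I}}$ obtained by deleting every interval not in $OPT$, so that the intervals of $OPT$ keep the relative order in which they appear in $S_{\mathcal{I}}$. The argument then has two ingredients, a distributional claim and a pointwise inequality.

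\emph{Step 1 (distribution).} The induced order of $OPT$ inside a uniformly random permutation of $\mathcal{I}$ is a uniformly random permutation of $OPT$. To see this, note that for every fixed ordering $\sigma \in \Pi(OPT)$, the number of permutations of $\mathcal{I}$ that induce $\sigma$ is $|\mathcal{I}|!/|OPT|!$. This count does not depend on $\sigma$, so each $\sigma$ is equally likely. Hence $S_{OPT}$ as constructed has exactly the distribution $\Pi(OPT)$, and $\Exp_{S_{OPT}\sim\Pi(OPT)}|OUT(\cA(S_{OPT}))|$ equals the expectation of $|OUT(\cA(S_{OPT}))|$ under the coupling.

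\emph{Step 2 (pointwise inequality).} For every realization of the coupling, $S_{OPT}$ is a substream of $S_{\mathcal{I}}$. All intervals lie in the algorithm's domain, since the algorithm is instantiated on $[-1,\Delta+1)$ and the intervals lie in $[0,\Delta)$. So Lemma~\ref{lem: technical} applies and gives $|OUT(\cA(S_{OPT}))| \le |OUT(\cA(S_{\mathcal{I}}))|$ pointwise. The algorithm is deterministic, so no extra randomness needs to be coupled.

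Taking expectations over the coupling, the left side equals the expectation over $\Pi(OPT)$ by Step 1, and the right side is the expectation over $\Pi(\mathcal{I})$, which gives the observation. The only step that needs care is Step 1, the claim that the induced order is uniform. It follows directly from the counting symmetry above, so there is no real obstacle beyond Lemma~\ref{lem: technical}, which is already established.
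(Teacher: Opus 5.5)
Your proof is correct and follows the paper's argument: apply Lemma~\ref{lem: technical} pointwise to the substream of optimal intervals inside $S_{\mathcal{I}}$, then use that this substream has the same distribution as $S_{OPT}\sim\Pi(OPT)$. Your counting argument, that each ordering of $OPT$ is induced by exactly $|\mathcal{I}|!/|OPT|!$ permutations of $\mathcal{I}$, proves the distributional step the paper only asserts.
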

\begin{proof}
    For a permutation $S_{\mathcal{I}}$ of the intervals $\mathcal{I}$, we denote by $OPT(S_{\mathcal{I}})$ the substream of optimal intervals. Then, since $OPT(S_{\mathcal{I}})$ is a substream of $S_{\mathcal{I}}$, we can use Lemma~\ref{lem: technical} to  obtain:
    \[
        \Exp_{S_{\mathcal{I}} \sim \Pi(\mathcal{I})} |OUT(\mathcal{A}(OPT(S_{\mathcal{I}})))| \leq \Exp_{S_{\mathcal{I}} \sim \Pi(\mathcal{I})} |OUT(\mathcal{A}(S_{\mathcal{I}}))|.
    \]
The result then follows from the observation that $S_{OPT}$ and $OPT(S_{\mathcal{I}})$ are identically distributed.    
\end{proof}

We will thus assume from now on that the input instance $\mathcal{I}$ is a collection of disjoint intervals.
We then take the input stream to be $\cS \sim \Pi(I)$ for the remainder of this subsection, and omit writing that $\cS$ is the input to $\cA$.

In the following, we will establish lower bounds on $out(x)$, for every $x \ge 0$. To this purpose, we treat the cases $x \in \{0, 1, 2\}$ explicitly, and then provide a recurrence relation for larger values of $x$. 

\begin{restatable}{lemma}{lema}\label{lem: OPT leq 2}
  For each $x\in\{0,1,2\}$, $out(x) = x$.
\end{restatable}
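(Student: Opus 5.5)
The plan is to prove the matching upper and lower bounds $out(x) \le x$ and $out(x) \ge x$ separately. The upper bound is immediate: every candidate output of Algorithm~\ref{alg: alg} is an independent set of input intervals (as noted after the description of the algorithm), so on every ordering of an instance $\mathcal{I}$ with $\alpha(\mathcal{I}) = x$ the output has size at most $x$. For the lower bound, Observation~\ref{obs: worst case stream} lets me restrict to instances consisting of exactly $x$ pairwise disjoint unit intervals, all inside $[0,\Delta)$. The algorithm is the top-level instance on $[-1,\Delta+1)$. I will show that for \emph{every} arrival order the output has size $x$, so no averaging over orders is needed.

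First I will prove an auxiliary fact. Let $\cA$ be an instance on $[a,b)$ that receives a nonempty stream. If some interval $I$ of this stream satisfies $I \subseteq [a+1,b)$ or $I \subseteq [a,b-1)$, then the output of $\cA$ has size at least $1$. For the proof, in the first case take split point $i=a+1$. The interval $I$ satisfies the condition $I \subseteq [i,b)$, so $R_{a+1}$ becomes and stays nonempty, and the candidate $OUT(\cT^L_{a+1}) \cup R_{a+1} \cup OUT(\cA^R_{a+1})$ has size at least $1$. The second case is symmetric, using $i=b-1$ and $L_{b-1}$. This is where the slack of one unit on each side of $[0,\Delta)$ matters. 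In a domain of length $2$, an interval $[s,s+1]$ with $a<s<a+1$ fits in neither half and could be missed entirely.

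The case $x=0$ is trivial. For $x=1$, the single interval lies in $[0,\Delta) \subseteq [-1+1,\Delta+1)$, so the auxiliary fact applied to the top-level instance gives output size $1$.

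For $x=2$, let the intervals be $I_1$ left of $I_2$, disjoint, both in $[0,\Delta)$.
\begin{itemize}
\item Suppose $I_1$ arrives first. Take split point $i=0$. Both intervals lie in $[0,\Delta+1)$, so $R_0$ is set to $I_1$ on arrival. When $I_2$ arrives it is not further left than $I_1$, so $R_0$ stays $I_1$. Moreover $I_2$ is independent of $R_0$ and further right than it, so $I_2$ is fed into $\cA^R_0$, which runs on $[0,\Delta+1)$. Since $I_2 \subseteq [0,\Delta) = [a,b-1)$ for that instance, the auxiliary fact gives $|OUT(\cA^R_0)| \ge 1$. The candidate $OUT(\cT^L_0) \cup R_0 \cup OUT(\cA^R_0)$ therefore has size at least $2$.
\item Suppose $I_2$ arrives first. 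This is the mirror image with split point $i=\Delta$. Here $L_\Delta = I_2$ remains fixed, and $I_1$ is fed into $\cA^L_\Delta$ on $[-1,\Delta)$. Since $I_1 \subseteq [0,\Delta) = [a+1,b)$ for that instance, the auxiliary fact again gives output size at least $1$.
\end{itemize}
In both cases the output has size $2$.

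The main point needing care is the exact feeding conditions of the pseudocode. In particular, the left-branch test reads ``further left than $R_i$'', which I will interpret as the intended ``further left than $L_i$'', mirroring the right branch. I also need to check that $R_0$, respectively $L_\Delta$, is not overwritten by the second interval. Everything else is direct verification.
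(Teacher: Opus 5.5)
Your proof is correct and follows the paper's argument: case on which interval arrives first, use the outermost split point so that $R$ (resp.\ $L$) locks onto that interval, and observe that the other interval is fed into the corresponding recursive instance, which then outputs at least one interval. Your version is more careful than the paper's, which refers to split points $1$ and $\Delta+1$; the latter does not exist on $[-1,\Delta+1)$, and $R_1$ misses an interval such as $[1/2,3/2]$. Your choice of split points $0$ and $\Delta$, the auxiliary fact about domain slack, the explicit upper bound $out(x)\le x$, and the $L_i$ reading of the pseudocode's ``further left than $R_i$'' typo close these small gaps.
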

\begin{proof}
    First, $\alpha = 0$ implies the input stream has no intervals, which is trivial.
    Next, if $\alpha = 1$, then the input stream consists of a single interval. This interval will be picked up by $R_1$ since $R_1$ constitutes the left-most interval in the domain $[0, \Delta)$. 

    Now suppose that $\alpha = 2$, which implies that the input consists of two intervals $opt_1$ and $opt_2$ with $opt_1$ further left of $opt_2$.
    If $opt_1$ arrives before $opt_2$, then $R_1$ will store $opt_1$, and $opt_2$ will be fed into $\cA^R_1$ and returned by this recursive call.
    The interval $opt_2$ will then be stored as the right most interval fed into $\cA^R_1$, ensuring that $\vert OUT(\cA^R_1) \vert \geq 1$.
    We thus obtain that $\vert R_0 \cup OUT(\cA^R_0) \vert = 2$.
    Then, by the construction of the algorithm, we get
    \[
        \vert OUT(\cA) \vert \geq \vert \{ R_1 \} \cup OUT(\cA^R_1) \vert \geq 1 + 1 = 2.
    \]
    Similarly, if $I_2$ arrives first then $L_{\Delta+1}$ and the output of $\cA^L_{\Delta+1}$ constitutes a candidate solution of size $2$.
\end{proof}

Next, we consider the case $\alpha \geq 3$.
We write
\[
  OPT = \{opt_1,\dots,opt_\alpha\},
\]
to denote the optimal intervals ordered from left to right,
and consider the cases for which interval from $OPT$ arrives first.
We then fix some $i\in\{1,\dots,\alpha\}$, and suppose that $opt_i$ arrives in the input stream before any other interval in $OPT$.
Let $opt_i = [f,f+1]$, for some real $f$, and let 
\[
  r = \lfloor f+2 \rfloor, \text{ and } l = \lfloor f \rfloor
\]
denote the smallest integer larger than $f+1$ that does not intersect with $opt_i$, and the largest integer smaller than or equal to $f$ which may only intersect with $opt_i$'s left endpoint, respectively.
Note that when $f$ is an integer, we have $l = f$, but $r = f + 2$. This choice is due to the fact that every recursive instantiation of our algorithm is run on a domain with a closed left endpoint and an open right endpoint.

\begin{restatable}{lemma}{lemb}\label{lem: with opt_i}
    Conditioned on $opt_i = [f, f+1]$ arriving first, both of the following hold:
    \begin{enumerate}
        \item $\bE \left[ \left\vert R_l \cup OUT\left(\cA^R_l\right) \right\vert \ | \ opt_i \text{ arrives first} \right] = 1 + out(\alpha - i)$, and
        \item $\bE \left[ \left\vert L_r \cup OUT\left(\cA^L_r\right) \right\vert \ | \ opt_i \text{ arrives first} \right] = 1 + out(i - 1)$.
    \end{enumerate}
\end{restatable}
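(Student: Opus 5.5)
The plan is to prove part 1 by tracking exactly which intervals reach $R_l$ and $\cA^R_l$ once we condition on $opt_i=[f,f+1]$ arriving first. Part 2 then follows by the mirror-image argument with $L_r$ and $\cA^L_r$. By Observation~\ref{obs: worst case stream} we are working with an input $\mathcal{I}=OPT=\{opt_1,\dots,opt_\alpha\}$ of pairwise disjoint unit intervals in $[0,\Delta)$, and the algorithm is run on $[-1,\Delta+1)$. In particular $l=\lfloor f\rfloor\in\{0,\dots,\Delta-1\}$ is a legitimate split point, and the instance $\cA^R_l$ runs on the domain $[l,\Delta+1)$.

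\textbf{Step 1: the substream entering the $R_l$ branch.} I first show that the intervals contained in $[l,\Delta+1)$ are exactly $opt_i,opt_{i+1},\dots,opt_\alpha$.
\begin{itemize}
\item For $j\ge i$, the left endpoint of $opt_j$ is at least $f\ge l$, so $opt_j\subseteq[l,\Delta+1)$.
\item For $j<i$, disjointness gives that $opt_j$ ends strictly before $f$. Hence $opt_j$ starts strictly before $f-1<\lfloor f\rfloor=l$, so $opt_j\not\subseteq[l,\Delta+1)$.
\end{itemize}
Thus the test $I\subseteq[i,b)$ in the algorithm (with $i=l$) passes exactly for $opt_i,\dots,opt_\alpha$.

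\textbf{Step 2: behaviour of $R_l$ and $\cA^R_l$.} Since $opt_i$ arrives first among all intervals, $R_l$ is set to $opt_i$ on its arrival. It is never replaced, because every later interval in this substream lies strictly further right. Each later $opt_j$ with $j>i$ is independent of $R_l=opt_i$ and further right, so it is fed into $\cA^R_l$. No other interval is fed into $\cA^R_l$, and nothing is fed to it before $opt_i$ arrives. Hence, deterministically given the order,
\[
\left\vert R_l\cup OUT\left(\cA^R_l\right)\right\vert = 1+\left\vert OUT\left(\cA^R_l\right)\right\vert,
\]
where $\cA^R_l$ has seen exactly the $\alpha-i$ intervals $opt_{i+1},\dots,opt_\alpha$. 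The two terms are disjoint because the outputs of $\cA^R_l$ come from intervals independent of $opt_i$.

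\textbf{Step 3: distribution of the recursive input.} Conditioned on $opt_i$ arriving first, the relative order of the remaining $\alpha-1$ intervals is a uniformly random permutation. Its restriction to $\{opt_{i+1},\dots,opt_\alpha\}$ is therefore uniform on $\Pi(\{opt_{i+1},\dots,opt_\alpha\})$. So $\cA^R_l$ is an instance of Algorithm~\ref{alg: alg} run on a uniformly random ordering of an independent instance with optimum $\alpha-i$ inside its domain. Its expected output size is the quantity that $out(\alpha-i)$ measures, which gives part 1.

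\textbf{Main obstacle.} The step I expect to be most delicate is the final identification with $out(\alpha-i)$. The algorithm's behaviour depends on the positions of intervals relative to the integer split points, and the sub-instance lives in a shifted and shortened domain $[l,\Delta+1)$ rather than $[-1,\Delta+1)$. I will handle this in two parts:
\begin{itemize}
\item Translation by the integer $l$ preserves every comparison the algorithm makes, so it does not change the output.
\item Each value $out(x)$ is to be read as the expected output of the algorithm on the random-order independent sub-instance it is actually run on. This is the only quantity used in the recurrence, and by Observation~\ref{obs: worst case stream} independent sets are the worst case, so it is the object the recurrence needs.
\end{itemize}
Once this is settled, part 2 is identical after reflecting the line. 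The relevant intervals are then $opt_1,\dots,opt_i$, all contained in $[-1,r)$ with $r=\lfloor f+2\rfloor$, and $opt_i$ is the rightmost of them, so $L_r=opt_i$. The remaining $i-1$ intervals are fed in uniformly random order into $\cA^L_r$, giving $1+out(i-1)$.
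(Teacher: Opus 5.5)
Your Steps 1--3 are correct and follow the same route as the paper. $opt_i$ is stored in $R_l$ and never displaced, and exactly $opt_{i+1},\dots,opt_\alpha$ are fed to $\cA^R_l$, in uniformly random order; you are more careful than the paper about which intervals reach the $l$-branch and why $|R_l\cup OUT(\cA^R_l)|=1+|OUT(\cA^R_l)|$. The gap is the final identification with $out(\alpha-i)$. Since $out(x)$ is defined as a \emph{minimum} over all instances with optimum $x$, Step 3 only yields $\bE[\,|OUT(\cA^R_l)|\mid opt_i\text{ arrives first}]\ge out(\alpha-i)$, hence $\ge 1+out(\alpha-i)$ for part 1. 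This inequality is exactly what the paper's proof derives, and it is all that Lemma~\ref{lem: combined output bound} uses. Equality, as literally stated, is false in general, because the expected output on an independent set depends on where its intervals sit relative to the integer split points. The intervals $[0.5,1.5],[1.6,2.6],[2.7,3.7]$ have expected output $8/3$: if the middle one arrives first, no integer split point separates it from its neighbours and the output has size $2$. By contrast, three intervals with gaps larger than $1$ between consecutive ones have expected output $3$. Taking $opt_{i+1},\dots,opt_\alpha$ of the latter kind with $\alpha-i=3$ makes the left-hand side $4>1+8/3\ge 1+out(3)$.

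Your proposed repair, reading $out(x)$ as ``the expected output on the sub-instance actually run on'', is not legitimate. It makes $out$ a function of the instance rather than of $x$, so it is no longer the single sequence that the recurrence of Lemma~\ref{lem: alg restricted domains} computes. Observation~\ref{obs: worst case stream} does not rescue it: it only compares an instance with its own optimal subset, not different independent sets of the same size with each other. Similarly, translation by an integer handles the shift of the domain but not its shortening. The shortening has to be absorbed into the definition of $out$, as a minimum also over the smaller domains that arise in the recursion. For $\cA^R_l$ this fits the paper's set-up: its input starts after $f+1\ge l+1$ and ends before $\Delta$. Shifting left by $l+1$ therefore turns it into $\cA([-1,\Delta'+1))$ with $\Delta'=\Delta-l-1$ and all intervals in $[0,\Delta')$. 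Symmetrically, $\cA^L_r$ runs on $[-1,r)$ with input in $[0,f)\subseteq[0,r-1)$. With these changes your argument proves both parts with $=$ replaced by $\ge$, which is the correct form of the lemma.
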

\begin{proof}
    We only prove 1) -- the proof of 2) is similar and omitted.
    
    By definition, the interval $opt_i$ is further right and independent of $l$, and so will be stored in $R_l$.
    Next, observe that each of the intervals $opt_{i+1},\dots,opt_{\alpha}$ are fed into $\cA^R_l$ in uniform random-order. 
    Then, by definition of $out(.)$, we obtain $\bE\vert OUT(\cA^R_l) \vert \ge out(\alpha - i)$.
    Finally, combining $R_l$ and $\cA_R^l$ via linearity of expectation yields the desired bound.

    \textit{Remark:} We note that each interval being contained in the domain $[0,\Delta)$ implies that $0 \leq l$ and also $r \leq \Delta$.
    Recall that we run algorithm $\mathcal{A}$ on the domain $[-1, \Delta+1)$, which implies that both the instances $\cA^R_l$ and $\cA^L_r$ are guaranteed to exist.
\end{proof}

The next lemma follows by a similar proof.

\begin{restatable}{lemma}{lemc}\label{lem: without opt_I}
    Conditioned on $opt_i = [f,f+1]$ arriving first, both of the following hold:
    \begin{enumerate}
        \item $\bE \left[ \left\vert OUT\left(\cT^R_r\right) \right\vert \ | \ opt_i \text{ arrives first} \right] \geq out(\alpha - i - 1)$, and
        \item $\bE \left[ \left\vert OUT\left(\cT^L_l\right) \right\vert \ | \ opt_i \text{ arrives first} \right] \geq out(i - 2)$.
    \end{enumerate}
\end{restatable}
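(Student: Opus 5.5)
We prove only part 1; part 2 follows by the mirror-image argument, exchanging left and right, using $l$ in place of $r$ and $\cT^L_l$ in place of $\cT^R_r$. We mirror the proof of Lemma~\ref{lem: with opt_i}. Recall that the input consists only of the independent intervals $opt_1,\dots,opt_\alpha$. The recursive instance $\cT^R_r$ receives every interval of the stream that is fully contained in $[r,\Delta+1)$, with no filtering against $R_r$. The first step is to determine exactly which optimal intervals these are.

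Since $opt_i=[f,f+1]$ and the intervals are disjoint with unit length, $opt_{i+1}$ has left endpoint strictly greater than $f+1$, and $opt_{i+2}$ has left endpoint strictly greater than $f+2 \ge r$. Hence $opt_{i+2},\dots,opt_\alpha$ all lie in $[r,\Delta+1)$ and are fed into $\cT^R_r$. The interval $opt_{i+1}$ may or may not be fed in, depending on whether its left endpoint is at least $r$. Symmetrically, $opt_1,\dots,opt_{i-2}$ all have right endpoint strictly less than $f-1$, hence less than $l$, so they lie in $[-1,l)$ and are fed into $\cT^L_l$; the interval $opt_{i-1}$ may or may not be.

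The second step is to handle the conditioning and the randomness of the order. Conditioned on $opt_i$ arriving first, the relative order of the remaining optimal intervals is still uniformly random. So the substream $\cS_r$ fed into $\cT^R_r$ is a uniformly random permutation of some fixed set $J$, where $J$ is either $\{opt_{i+2},\dots,opt_\alpha\}$ or that set together with $opt_{i+1}$. We have $\alpha(J)\ge \alpha-i-1$.

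If $\alpha(J)=\alpha-i-1$, the definition of $out(\cdot)$ gives $\bE|OUT(\cT^R_r)|\ge out(\alpha-i-1)$ directly. Otherwise $J$ contains the extra interval $opt_{i+1}$. In that case we apply Lemma~\ref{lem: technical}, exactly as in Observation~\ref{obs: worst case stream}. The substream of $\cS_r$ consisting of $opt_{i+2},\dots,opt_\alpha$ is itself a uniform random permutation of those intervals. Removing $opt_{i+1}$ from the stream cannot increase the output, so the expectation is at least $out(\alpha-i-1)$. When the index $\alpha-i-1$ is negative or zero, the bound holds trivially by the convention that $out(x)=0$ for $x\le 0$.

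The main point requiring care is the first step: showing that the integer split point $r=\lfloor f+2\rfloor$ sits far enough right that every optimal interval from $opt_{i+2}$ onward is entirely contained in $[r,\Delta+1)$. Here the strict inequalities come from the intervals being closed and disjoint. A secondary point is that $\cT^R_r$ is a fresh instance run on a domain of the same form $[r,\Delta+1)$, not on $[-1,\Delta+1)$. So $out(\cdot)$ should be read as a worst case over domains of this form. Alternatively, we note that the worst-case quantity only decreases as the domain shrinks, which is exactly the sense in which Lemma~\ref{lem: with opt_i} already used it.
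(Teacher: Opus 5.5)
Your proof is correct and is essentially the paper's argument; the paper proves the mirror case $\cT^L_l$. The substream fed to $\cT^R_r$ is a uniformly ordered copy of either the $\alpha-i$ or the $\alpha-i-1$ rightmost optimal intervals, and the larger case is handled by monotonicity from Lemma~\ref{lem: technical}: the paper phrases this as $out(\cdot)$ being non-decreasing, while you apply the lemma directly to the substream, which is the same step unpacked. Your explicit strict-inequality check that $opt_{i+2},\dots,opt_\alpha\subseteq[r,\Delta+1)$, and your remark that $out(\cdot)$ is being applied to an instance on a subdomain, are more careful than the paper, which uses that convention silently; you should drop the ``alternative'' justification, though, because a claim that the worst case only gets smaller on smaller domains could not transfer the full-domain $out(\cdot)$ to $\cT^R_r$.
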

\begin{proof}
    We only prove 2) -- the proof of 1) is similar and omitted. 
    
    By construction, the substream $\mathcal{S}'$ fed into $\cT^L_l$ consists of the intervals in $\{opt_1,\dots,opt_{i-1}\}$ that do not overlap with the point $l$.
    Then, since each interval is of unit length, and also $f - l \leq 1$ holds, $\mathcal{S}'$  contains either the $i-1$ or $i-2$ left-most intervals of $OPT$, and these are fed into $\cT^L_l$ in a uniform random order.
    We thus obtain $\bE\vert OUT(\cT^L_l) \vert \geq \min\{out(i-1), out(i-2)\}$.
    Finally, by Lemma \ref{lem: technical}, $out(\cdot)$ is a non-decreasing function, which implies that $out(i-1) \ge out(i-2)$ holds, and completes the proof.
\end{proof}

We now combine these lemmas to form a recurrence for $out(\cdot)$.
In the following, for convenience, we write the maximum function over multiple lines.

\begin{restatable}{lemma}{lemd}\label{lem: combined output bound}
    Conditioned on $opt_i$ arriving first, the following holds:
    \[
        \bE\left[\vert OUT(\cA)\vert \mid \text{$opt_i$ arrives first} \vert\right] \geq 1 + \max\left\{ \begin{array}{lr} out(i - 1) + out(\alpha - i -1)\\ out(\alpha - i) + out(i - 2)\end{array} \right\}. 
    \]    
\end{restatable}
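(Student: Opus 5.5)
The plan is to lower-bound the output by two specific candidate solutions that the algorithm considers, and to bound their expected sizes using Lemmas~\ref{lem: with opt_i} and~\ref{lem: without opt_I}. By the Output step of Algorithm~\ref{alg: alg}, $|OUT(\cA)|$ is at least the size of any candidate $OUT(\cT^L_j)\cup R_j\cup OUT(\cA^R_j)$ or $OUT(\cA^L_j)\cup L_j\cup OUT(\cT^R_j)$ for every split point $j$. I take $j=l$ for the first type and $j=r$ for the second type, where $l,r$ are as defined before Lemma~\ref{lem: with opt_i}. Running $\cA$ on $[-1,\Delta+1)$ ensures $l,r\in\{0,\dots,\Delta\}$ are valid split points. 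These unions are disjoint (the pieces live in disjoint domains, or are separated by $R_l$ / $L_r$), so their sizes add.

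\textbf{First candidate.} Conditioned on $opt_i$ arriving first, Lemma~\ref{lem: with opt_i}(1) gives $\bE|R_l\cup OUT(\cA^R_l)| = 1+out(\alpha-i)$. Lemma~\ref{lem: without opt_I}(2) gives $\bE|OUT(\cT^L_l)|\ge out(i-2)$. By linearity of expectation, the first candidate has conditional expectation at least $1+out(\alpha-i)+out(i-2)$.

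\textbf{Second candidate.} Symmetrically, Lemma~\ref{lem: with opt_i}(2) and Lemma~\ref{lem: without opt_I}(1) give conditional expectation at least $1+out(i-1)+out(\alpha-i-1)$.

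\textbf{Combining.} Since the output is at least each candidate pointwise, we have $|OUT(\cA)|\ge \max(X,Y)$, where $X,Y$ are the two candidate sizes. Hence $\bE|OUT(\cA)|\ge \max(\bE X,\bE Y)$, because the expectation of a maximum dominates the maximum of expectations. This yields the claimed bound.

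\textbf{Main subtlety.} One must justify that conditioning on ``$opt_i$ arrives first'' leaves the remaining intervals in uniform random order, so that the definition of $out(\cdot)$ applies. This holds because, under this conditioning, the relative order of the other intervals is still a uniform permutation; the cited lemmas already use this. Boundary cases are harmless: for $i=1$ or $i=\alpha$, arguments such as $out(-1)$ or $out(\alpha-i-1)$ with negative argument are defined as $0$. The small inconsistency that the paper writes $opt(x)=0$ rather than $out(x)=0$ for negative $x$ should be read as $out(x)=0$.
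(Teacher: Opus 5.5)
Your proposal is correct and is essentially the paper's proof. You bound $|OUT(\cA)|$ below by the two candidates at split points $l$ and $r$, evaluate each via Lemmas~\ref{lem: with opt_i} and~\ref{lem: without opt_I} with linearity of expectation, and then pass from the expectation of the maximum to the maximum of the expectations; the paper calls that last step ``Jensen's inequality,'' while your justification (monotonicity of expectation, since $|OUT(\cA)|$ dominates each candidate pointwise) is the more precise one.
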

\begin{proof}
     The proof follows almost immediately from Lemmas~\ref{lem: with opt_i} and \ref{lem: without opt_I}:    
    \begin{align*}
        \bE\vert OUT(\cA) \mid \text{$opt_i$ arrives first} \vert 
                    &\geq \bE\left[\max\left\{ \begin{array}{lr} \vert OUT\left(\cA^L_r\right) \cup L_r \cup OUT\left(\cT^R_r\right) \vert\\ \vert OUT\left(\cT^L_l\right) \cup R_l \cup OUT\left(\cA^R_l\right) \vert \end{array} \right\}\right]\tag{Construction of the algorithm.}\\
                    &\geq \max\left\{ \begin{array}{lr} \bE\vert OUT\left(\cA^L_r\right) \cup L_r \cup OUT\left(\cT^R_r\right) \vert\\ \bE\vert OUT\left(\cT^L_l\right) \cup R_l \cup OUT\left(\cA^R_l\right) \vert \end{array} \right\}\tag{Jensen's inequality.}\\
                    &\geq \max\left\{ \begin{array}{lr} 1 + out(i - 1) + out(\alpha - i -1)\\ 1 + out(\alpha - i) + out(i - 2)\end{array} \right\}.\tag{Lemmas \ref{lem: with opt_i} and \ref{lem: without opt_I}.}\\
                    &\geq 1 + \max\left\{ \begin{array}{lr} out(i - 1) + out(\alpha - i -1)\\ out(\alpha - i) + out(i - 2)\end{array} \right\}
    \end{align*}
\end{proof}

In the following lemma, we obtain a bound on the expected approximation factor of our algorithm.

\begin{lemma}\label{lem: alg restricted domains}
    The expected approximation factor achieved by $\cA$ on intervals that lie in the domain $[0, \Delta)$ is bounded from below by
    \begin{align*}
        \bE&[\text{approximation factor}]\\
           &\geq \min_{\alpha\in\{1,\dots,\Delta-1\}}\left\{ \frac{1}{\alpha} + \frac{1}{\alpha^2}\cdot\sum_{i=1}^{\alpha} \max\left\{ \begin{array}{lr} out(i - 1) + out(\alpha - i -1)\\ out(\alpha - i) + out(i - 2)\end{array} \right\} \right\},
    \end{align*}
    where $out(2) = 2$, $out(1) = 1$, and $out(y) = 0$ for all $y \leq 0$.
\end{lemma}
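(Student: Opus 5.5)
The bound follows by combining Observation~\ref{obs: worst case stream} with a law-of-total-expectation argument over which optimal interval arrives first, using Lemma~\ref{lem: combined output bound} for each case.

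\textbf{Reduction to independent instances.} Fix an arbitrary instance $\mathcal{I}$ with all intervals in $[0,\Delta)$, and let $\alpha=\alpha(\mathcal{I})$. By Observation~\ref{obs: worst case stream}, the expected output on a random permutation of $\mathcal{I}$ is at least the expected output on a random permutation of $OPT$ alone. It therefore suffices to lower bound $\bE|OUT(\cA)|/\alpha$ when the stream is a uniformly random ordering of $\alpha$ disjoint unit intervals in $[0,\Delta)$. Disjoint unit intervals inside $[0,\Delta)$ number at most $\Delta-1$, since closed intervals cannot share endpoints and the domain is half-open. Hence $\alpha\in\{1,\dots,\Delta-1\}$, which justifies taking the minimum over this range. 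The case $\alpha=0$ is trivial.

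\textbf{Small cases.} For $\alpha\in\{1,2\}$, Lemma~\ref{lem: OPT leq 2} gives an expected factor of $1$. The displayed expression is also at least $1$ in these cases:
\begin{itemize}
\item For $\alpha=1$ it equals $1+0=1$.
\item For $\alpha=2$, take $i=1$ and the second branch of the maximum: $out(1)+out(-1)=1$. Likewise $i=2$ gives $out(1)+out(0)=1$. The sum is at least $2$, so the expression is at least $1/2+2/4=1$.
\end{itemize}
So the claimed minimum is still a valid lower bound in these cases. The boundary values $out(2)=2$, $out(1)=1$ and $out(y)=0$ for $y\le 0$ are exactly Lemma~\ref{lem: OPT leq 2} together with the convention for negative arguments.

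\textbf{Main case $\alpha\ge 3$.} The events ``$opt_i$ arrives first among $OPT$'' for $i=1,\dots,\alpha$ partition the probability space. Under a uniform random order each has probability $1/\alpha$. By the law of total expectation and Lemma~\ref{lem: combined output bound},
\[
\bE|OUT(\cA)| \ge \frac{1}{\alpha}\sum_{i=1}^{\alpha}\left(1+\max\left\{\begin{array}{l} out(i-1)+out(\alpha-i-1)\\ out(\alpha-i)+out(i-2)\end{array}\right\}\right).
\]
Dividing by $\alpha$ gives the stated expression for this $\alpha$. Taking the worst case over $\alpha$ yields the lemma.

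\textbf{Main obstacle.} The delicate point is at the edges $i=1$ and $i=\alpha$, where the conditioning lemmas reference split points $l$ or $r$ that may equal $0$ or $\Delta$, and arguments such as $i-2$ or $\alpha-i-1$ become negative. Running $\cA$ on $[-1,\Delta+1)$ guarantees that the relevant recursive instances exist, as remarked after Lemma~\ref{lem: with opt_i}. Setting $out(y)=0$ for negative $y$ makes Lemma~\ref{lem: without opt_I} hold trivially there. I would check explicitly that Lemmas~\ref{lem: with opt_i} and \ref{lem: without opt_I} remain valid in these boundary cases.

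A second point needs care. $out(\cdot)$ is defined relative to instances on the domain of the recursive call rather than $[-1,\Delta+1)$. This is handled by interpreting $out(x)$ as the minimum over all domains, or equivalently by noting the analysis only depends on relative positions. Monotonicity of $out$, from Lemma~\ref{lem: technical}, is used implicitly through Lemma~\ref{lem: without opt_I}.
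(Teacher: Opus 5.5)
Your proposal is correct and follows essentially the paper's proof. You condition on which $opt_i$ arrives first (each with probability $1/\alpha$), apply Lemma~\ref{lem: combined output bound}, average, divide by $\alpha$, and take the minimum over $\alpha\in\{1,\dots,\Delta-1\}$. The paper runs this on an instance attaining $out(\alpha)$, which gives the recurrence for $out$. You instead work with an arbitrary instance and make explicit both the reduction via Observation~\ref{obs: worst case stream} and the cases $\alpha\le 2$.

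One caution about your side remark: reading $out(x)$ as a minimum over \emph{all} domains would not work. On an unpadded domain even a lone interval can be lost: $[1.5,2.5]$ fed to $\cA$ on $[1,3)$ lies in neither $[1,2)$ nor $[2,3)$, so the output is empty, contradicting $out(1)=1$. For this lemma you should simply rely on Lemma~\ref{lem: combined output bound} as stated, exactly as the paper does.
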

\begin{proof}
    Let $\alpha \in \{1, \dots, \Delta-1\}$ and consider an input instance $\mathcal{I}$ that establishes the minimum in the definition of $out(\alpha)$. We consider a run of $\mathcal{A}$ on $\mathcal{I}$. Then:
    \begin{align*}
        out(\alpha) & = \bE \left[ \vert OUT(\mathcal{A}) \vert \right] \\
                  &= \sum_{i=1}^{\alpha} \bP(opt_i \text{ arrives first}) \cdot \bE\left[\vert OUT(\mathcal{A}) \vert \mid opt_i \text{ arrives first} \right]\tag{Law of total expectation.}\\
                  &= \sum_{i=1}^{\alpha} \frac{1}{\alpha} \cdot \bE\left[\vert OUT(\mathcal{A}) \vert \mid opt_i \text{ arrives first} \right]\tag{Uniform random arrival order.}\\
                  &\geq \frac{1}{\alpha} \cdot \sum_{i=1}^{\alpha}  \left[ 1 + \max\left\{ \begin{array}{lr} out(i - 1) + out(\alpha - i -1)\\ out(\alpha - i) + out(i - 2)\end{array} \right\} \right]\tag{By Lemma \ref{lem: combined output bound}.}\\
                  &= 1 + \frac{1}{\alpha}\cdot\sum_{i=1}^{\alpha} \max\left\{ \begin{array}{lr} out(i - 1) + out(\alpha - i -1)\\ out(\alpha - i) + out(i - 2)\end{array} \right\}
    \end{align*}
    In particular, this gives us a recurrence relation for $out(\cdot)$, which we will later use to compute values numerically.
    The expected approximation factor is then simply this quantity multiplied by $1/\alpha$.

    Lastly, when our algorithm is run on intervals in the domain $[0, \Delta)$, then we only know that $\alpha \in \{1, \dots, \Delta-1 \}$. Hence, to bound the approximation factor of our algorithm on such instances, we consider all possible values of $\alpha \in \{1, \dots, \Delta-1 \}$ and pick the minimum approximation factor for these cases, which establishes the lemma. 
\end{proof}

\subsection{Bounding the Space Used}

In this section, we establish a bound on the space used by our algorithm. 

\begin{restatable}{lemma}{leme}\label{lem: space used}
    Let $\cA$ be an instance of Algorithm \ref{alg: alg} run on domain boundaries $a < b$ with $b-a = k$. 
    Then, $\cA$ uses $O(4^k k^k)$ words of space.
\end{restatable}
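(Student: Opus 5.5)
We set up a recurrence for the space $S(k)$ used by an instance of Algorithm~\ref{alg: alg} on a domain with $b-a=k$, and solve it by induction on $k$, as in the proof of Lemma~\ref{lem: technical}. The space splits into local storage and the storage of the recursive children. Locally, the instance keeps the variables $L_i$ and $R_i$ for each split point $i\in\{a+1,\dots,b-1\}$, which is $2(k-1)$ words, plus $O(1)$ words for the interval currently being processed. It also instantiates four children $\cA^L_i,\cT^L_i,\cA^R_i,\cT^R_i$ per split point. These run on $[a,i)$ and $[i,b)$, whose sizes are $i-a$ and $b-i$, each at most $k-1$.

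Since the children's space depends only on their domain sizes and not on the stream, this gives
\[
S(k) \le c\,k + \sum_{i=a+1}^{b-1} 2\bigl(S(i-a) + S(b-i)\bigr) \le c\,k + 4(k-1)\,\max_{j\le k-1} S(j)
\]
for an absolute constant $c$. The base cases are $S(0),S(1)=O(1)$: for $k=1$ there are no split points and no intervals fit in the domain. The algorithm is deterministic and never buffers the stream, so no other storage arises. The output sets are assembled from the children's outputs, which are independent sets of size at most $k$, so this costs only $O(k)$ extra words at the end.

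Next, $S$ is monotone in $k$: every split point of a smaller domain has a counterpart in a larger one. Hence $\max_{j\le k-1}S(j)=S(k-1)$, and the recurrence becomes $S(k)\le ck+4(k-1)S(k-1)$.

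We then show by induction that $S(k)\le C\cdot 4^k k^k$ for a suitable constant $C$. For the step we need
\[
ck + 4(k-1)\,C\,4^{k-1}(k-1)^{k-1} \le C\,4^k k^k .
\]
The second term equals $C\,4^k (k-1)^k$, which is at most $C\,4^k k^k\bigl(1-\tfrac1k\bigr)^k \le C\,4^k k^k/e$. That leaves a slack of at least $(1-1/e)\,C\,4^k k^k$, which absorbs $ck$ once $C\ge c$.

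The main obstacle is making the local term rigorous. Each instance must store only $O(k)$ words beyond its children, and in particular it must not keep any feeding state. For example, the test "further left than $R_i$" in the $L$-branch only compares against stored variables. With $b-a=\Delta+2$ constant, the bound is $O(1)$ words.
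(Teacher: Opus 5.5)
Your proposal is correct and follows the paper's proof. Both arguments count $4(k-1)$ recursive children on domains of size at most $k-1$ plus $O(k)$ locally stored intervals, giving $S(k)\le 4(k-1)S(k-1)+O(k)$; you additionally carry out the induction that solves this recurrence (via $(k-1)^k\le k^k/e$), which the paper only asserts.
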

\begin{proof}
    By construction, $\cA$ consists of the following:
    \begin{itemize}
        \item $4k - 4$ recursive instances of the algorithm, each on input domains of length at most $k - 1$,
        \item at most $2k - 2$ many stored intervals (in each $L_i$ and $R_i$).
    \end{itemize}
    Then, letting $S(k)$ denote the space used by an instance of Algorithm \ref{alg: alg} on a domain of length $k$, we obtain the recurrence
    \begin{align*}
        S(k) &\leq (4k - 4)\cdot S(k-1) + 2k - 2 \ , 
    \end{align*}
    which implies $S(k) = O(4^k k^k)$.
\end{proof}

\subsection{Going to Unrestricted Domains}

We now use the shifting window idea, originally by Hochbaum and Mass \cite{hm85}, and later applied by \cite{cp17} and \cite{ddk23}, that allows going from restricted to unrestricted domains. 

Our lemma below is a straightforward reformulation to random-order streams of a corresponding lemma given in \cite{ddk23} for adversarial order streams.
The key difference is that we now also use linearity of expectation to deal with the random order of the stream, as opposed to the previously considered adversarial order streams.
The proofs are otherwise quite similar.

\begin{restatable}{lemma}{URDOMAINS}\label{lem: unrestricted domain}
  Let $\Delta \ge 2$ be an integer.
  Let $\mathcal{A}$ be any (deterministic) algorithm for random-order unit-length interval selection on the restricted domain $[0,\Delta)$ that computes an $\alpha$-approximation in expectation (over the random order of the stream) using space $O(s)$.
  Then there is a (deterministic) algorithm for random-order unit-length interval selection on unrestricted domains that computes an $\alpha\cdot(\Delta-1)/\Delta$-approximation  in expectation (over the random order of the stream) using space $O(s\cdot\Delta\cdot\vert OPT \vert)$ that uses $\mathcal{A}$ as a black box.
\end{restatable}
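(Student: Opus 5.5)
We use the standard shifting-window construction. For each shift $s \in \{0,1,\dots,\Delta-1\}$, partition the real line into blocks $B^s_j = [j\Delta + s, (j+1)\Delta + s)$ for $j \in \bZ$. For a fixed shift $s$, an incoming interval $I$ is either fully contained in exactly one block $B^s_j$ or it crosses a block boundary. In the first case we feed $I$, translated by $-(j\Delta+s)$, into a copy $\cA^s_j$ of $\cA$ on $[0,\Delta)$. In the second case we discard $I$ for this shift. The copies $\cA^s_j$ are created lazily, only when the first interval falling into $B^s_j$ arrives. At the end of the stream, each shift $s$ outputs the union over $j$ of the (translated back) outputs of $\cA^s_j$. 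Because blocks are half-open and disjoint, and each output lies inside its block, this union is an independent set. The overall algorithm returns the largest of the $\Delta$ candidate solutions. The algorithm is deterministic whenever $\cA$ is.

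\textbf{Approximation.} Fix $OPT$. Each unit interval $[x,x+1]$ of $OPT$ contains at most one integer in its interior region relevant for boundaries. More precisely, it crosses a boundary $j\Delta+s$ of shift $s$ only if some point $j\Delta+s$ lies in $(x, x+1]$. This half-open interval of length one contains exactly one integer, so the interval is cut in exactly one of the $\Delta$ shifts. Summing over shifts, $\sum_s |OPT^s| \ge (\Delta-1)|OPT|$, where $OPT^s$ is the set of optimal intervals that survive in shift $s$.

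For a fixed $s$ and $j$, the substream fed to $\cA^s_j$ is the restriction of a uniformly random permutation to a fixed subset of intervals, so it is itself a uniformly random order of that subset. The expectation guarantee of $\cA$ therefore applies, giving
\[
\bE|OUT(\cA^s_j)| \ge \alpha\cdot \alpha(B^s_j) \ge \alpha\,|OPT^s\cap B^s_j| .
\]
By linearity of expectation, $\bE|SOL^s| \ge \alpha|OPT^s|$. Hence
\[
\bE\max_s|SOL^s| \ge \frac{1}{\Delta}\sum_s \bE|SOL^s| \ge \alpha\frac{\Delta-1}{\Delta}|OPT| .
\]
The first inequality uses only that a maximum is at least an average, pointwise, so no interchange of max and expectation is needed.

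\textbf{Space.} This is the main obstacle, since we must count only the nonempty blocks. A block $B^s_j$ receives an interval only if the input contains an interval inside it. Any interval is contained in a window of length $\Delta$, and the blocks of one shift containing at least one interval are disjoint. Picking one interval from each such block and keeping every other one gives an independent set, since blocks two apart are separated by length $\Delta \ge 2 > 1$. So the number of active blocks per shift is at most $2|OPT|$. The total space is therefore $\Delta\cdot 2|OPT|\cdot O(s)$, plus bookkeeping for block identities that can be stored in a dictionary keyed by $j$. This gives $O(s\cdot\Delta\cdot|OPT|)$.
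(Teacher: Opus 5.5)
Your proposal is correct and is essentially the paper's shifting-window argument. Your blocks $B^s_j$, taken over all shifts, are exactly the paper's windows $W_i^{\Delta}$, and the same ingredients drive both proofs: each $OPT$ interval is cut in exactly one shift, the substream reaching a block is a uniformly random order of a fixed set, and linearity of expectation. The differences are cosmetic. You average over shifts and return the best shift, where the paper picks the best $k$ and outputs a maximum independent set of the union. You bound the active blocks per shift by disjointness rather than via the neighbourhood $\mathcal{W}'$ of $OPT$; your ``every other one'' step is unnecessary, since one interval per active block is already independent, which gives $|OPT|$ rather than $2|OPT|$.
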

\begin{proof}
  Let $\mathcal{A}$ be any deterministic algorithm for random-order unit-length interval selection on the restricted domain $[0,\Delta)$ that computes an $\alpha$-approximation in expectation (over the random order of the stream) using space $O(s)$.
  We use $\mathcal{A}$ as a black-box to present an algorithm for unrestricted domains.

  To this end, for all $i\in\mathbb{Z}$, we first define the window $W_i^{\Delta} = [i, i+\Delta)$. Initially, we mark every window $\{W_i^{\Delta} : i\in\mathbb{Z}\}$ as \textsf{inactive}.
  For each interval $I$ arriving in the stream, let $\mathcal{W}_I = \{W_i^{\Delta} : I\subseteq W_i^{\Delta}\}$ be the set of windows that fully contain $I$.
  For each $W\in\mathcal{W}_I$, if $W$ is marked \textsf{inactive}, create a new instance of $\mathcal{A}$ corresponding to $W$, and feed $I$ into $\mathcal{A}$.
  Then mark $W$ as \textsf{active}.
  Otherwise, if $W$ is already marked \textsf{active} then there is already an instance of $\mathcal{A}$ running that corresponds to $W$.
  In this case, we feed the interval $I$ into the instance of $\mathcal{A}$.
  We note that, as a technicality, whenever an interval $I=[a,b]$ is to be fed into an instance $\mathcal{A}$ corresponding to a window $W_i^{\Delta}$ then
  $\mathcal{A}$ is given the transformed interval $I' = [a-i,b-i]$ so that $\mathcal{A}$'s input is contained within the restricted domain $[0,\Delta)$.

  For each window $W$, let $OUT(W)$ denote the output from the instance of $\mathcal{A}$ corresponding to $W$ at the end of the stream.
  We implicitly transform each interval in $OUT(W)$ back so that they correspond to intervals on the unrestricted domain.
  At the end of the stream output a maximum independent set of intervals from
  \[
    \bigcup_{\substack{i\in\mathbb{Z}\\ \text{$W_i^{\Delta}$ is $\textsf{active}$}}} OUT(W_i^{\Delta})
  \]
  We optimize the space used by the algorithm by only storing the set of windows that are marked as \textsf{active}, i.e. each window is implicitly marked as \textsf{inactive} until the time it is explicitly marked as \textsf{active}.
  Then, a window marked as \textsf{inactive} uses zero space.

  The space used by the algorithm is precisely the space used by windows marked as \textsf{active}, and each window uses space $O(s + 1) = O(s)$ (to store $\mathcal{A}$ as well as mark it \textsf{active}).
  For each window $W_i^{\Delta}$, define $W_i^{\Delta}(L) = [i-1,i-1+\Delta)$ and $W_i^{\Delta}(R) = [i+1,i+1\Delta)$ to be the windows shifted by one to the left and the right, respectively.
  Then define
  \begin{align*}
    W(OPT) &= \{W_i^{\Delta} : \exists I\in OPT\text{ s.t. }I\subseteq W_i^{\Delta}\},\\
    W_L(OPT) &= \{W(L) : W\in W(OPT)\},\\
    W_R(OPT) &= \{W(R) : W\in W(OPT)\}.
  \end{align*}
  Now define
  \[
    \mathcal{W}' = W(OPT) \cup W_L(OPT) \cup W_R(OPT).
  \]
  Next, for every interval $I$, there exists exactly $\Delta-1$ many values of $j$ such that $I\subseteq W_j^{\Delta}$.
  Therefore $\vert\mathcal{W}'\vert\leq 3\cdot\Delta\cdot\vert OPT \vert$.
  Also, the set of windows marked as \textsf{active} must be a subset of $\mathcal{W}'$.
  If not, there is a window marked as \textsf{active} that intersects no interval in $OPT$, and so there is an interval in $\mathcal{S}$ that intersects no interval in $OPT$.
  This contradicts the maximality of a maximum independent set.
  It follows that the total space used by the algorithm is bounded by
  \[
    O(s)\cdot O(\Delta\cdot\vert OPT\vert) = O(s\cdot\Delta\cdot\vert OPT \vert).
  \]

  For any window $W$, the set of intervals in $\mathcal{S}$ contained in $W$ is independent of the order of $\mathcal{S}$.
  Therefore, the substream of $\mathcal{S}$ on $W$ is also a random-order stream\footnote{This is where our argument adds on to the similar argument by \cite{ddk23}.}.

  Next, for each $j\in[\Delta]$, define the set of disjoint windows $\mathbb{W}_j = \{W_{j + a\Delta}^{\Delta} : a\in\mathbb{Z}\}$ and define $OPT_j = \{I\in OPT : \exists W\in\mathbb{W}_j\text{ s.t. }I\subseteq W\}$.
  Then, using that every interval in $OPT$ is contained in exactly $\Delta-1$ windows and only a single window from any given $\mathbb{W}_j$,
  \[
    \sum_{j=1}^{\Delta} \vert OPT_j \vert = (\Delta-1)\vert OPT \vert.
  \]
  Therefore, there exists some $k\in[\Delta]$ such that
  \begin{align}
    \vert OPT_k \vert \geq \frac{\Delta-1}{\Delta}\cdot\vert OPT \vert.\label{eq: OPT_k}
  \end{align}
  We now consider the set of windows $\mathbb{W}_k$.
  Each window in $\mathbb{W}_k$ is disjoint so $\bigcup_{W\in\mathbb{W}_k} OUT(W)$ is a valid independent set of the intervals of $\mathcal{S}$.
  Then, taking the expectation over the random order of the stream,
  \begin{align*}
    \mathbb{E}[\vert OUT \vert] &\geq \mathbb{E}\left[\sum_{W\in\mathbb{W}_k} \vert OUT(W) \vert\right]\tag{$\bigcup_{W\in\mathbb{W}_k} OUT(W)$ is a valid independent set.}\\
                                &= \sum_{W\in\mathbb{W}_k} \mathbb{E}[\vert OUT(W) \vert]\tag{Linearity of Expectation and the substream of $\mathcal{S}$ on $W$ is also random-order.}\\
      &\geq \sum_{W\in\mathbb{W}_k} \alpha\cdot\vert \{I\in OPT : I\subseteq W\} \vert\tag{$\mathcal{A}$ computes an $\alpha$-approximation in expectation.}\\
      &= \alpha\cdot\sum_{W\in\mathbb{W}_k} \vert \{I\in OPT : I\subseteq W\} \vert\\
      &= \alpha\cdot\left\vert\bigcup_{W\in\mathbb{W}_k}\{I\in OPT : I\subseteq W\}\right\vert\tag{Each $W\in\mathbb{W}_k$ is disjoint, so each $I$ is contained in exactly one $W\in\mathbb{W}_k$.}\\
      &= \alpha\cdot\vert\{I\in OPT: \exists W\in\mathbb{W}_k\text{ s.t. }I\subseteq W\}\vert\\
      &= \alpha\cdot\vert OPT_k \vert\tag{Definition of $OPT_k$.}\\
      &\geq \alpha\cdot\frac{\Delta-1}{\Delta}\cdot\vert OPT \vert.\tag{Equation \ref{eq: OPT_k}.}
  \end{align*}

  This meta-algorithm for unrestricted domains is deterministic.
  Therefore, the final algorithm for unrestricted domains is deterministic if and only if $\mathcal{A}$ is deterministic.
\end{proof}

\subsection{Completing the Analysis}

We now combine Algorithm \ref{alg: alg} with Lemma \ref{lem: unrestricted domain} to complete the proof of Theorem \ref{thm: alg}.

\ALG*

\begin{proof}
    The proof is by combining Algorithm \ref{alg: alg} with Lemma \ref{lem: unrestricted domain} to obtain an algorithm for random-order unit-length interval streams without any restrictions on the input domain.
    We first consider the space used by this approach.
    We then lower bound the expected approximation factor achieved on unrestricted domains with a function that only depends on the integer window length $\Delta$ (from Lemma \ref{lem: unrestricted domain}), and finally numerically optimise this function.

    From Subsection \ref{subs: expected approx}, on a window of length $\Delta$, we run an algorithm with domain size $\Delta + 2$.
    Then, from Lemmas \ref{lem: space used} and \ref{lem: unrestricted domain}, this yields an algorithm for unrestricted domains that uses
    \begin{align*}
        O(S(\Delta + 2)\cdot\Delta\cdot\vert OPT \vert) &= O\left( 4^{\Delta + 2}\cdot(\Delta+2)^{\Delta+2}\cdot\Delta\cdot\vert OPT \vert \right)\\
        &= O\left( 4^\Delta \cdot (\Delta+2)^{\Delta+3} \cdot \vert OPT \vert \right),
    \end{align*}
    words of space, where $S(k)$ denotes the space used by an instance of the algorithm instantiated on a domain of length $k$ (from Lemma \ref{lem: space used}).
    Then, when $\Delta = O(1)$, this space collapses to become $O(\vert OPT \vert)$.
  
    We now consider the approximation factor of the unrestricted algorithm.
    Observe that a largest independent set in the  window $[0,\Delta)$ is of size at most  $\Delta - 1$.
    Therefore, combining Lemmas \ref{lem: alg restricted domains} and \ref{lem: unrestricted domain}, for any $\Delta$, there is an algorithm for random-order unit-length interval selection on unrestricted domains that achieves an expected approximation factor of
    \begin{align}
      \frac{\Delta - 1}{\Delta}\cdot\min_{\alpha\in\{1,\dots,\Delta-1\}}\left\{ \frac{1}{\alpha} + \frac{1}{\alpha^2}\cdot\sum_{i=1}^{\alpha} \max\left\{ \begin{array}{lr} out(i - 1) + out(\alpha - i -1)\\ out(\alpha - i) + out(i - 2)\end{array} \right\} \right\}\label{eq: overall approx}.
    \end{align}
  
    It remains to pick a suitable constant $\Delta$.
    Computing (\cref{eq: overall approx}) numerically with $\Delta = 5,000$ (using the recurrence relation for $out(\cdot)$ of \cref{lem: alg restricted domains}) yields
    a lower bound of (slightly) more than $0.7401$ for the expected approximation factor, completing the proof.
    We discuss this choice of $\Delta$ further in Appendix \ref{app: different gamma}.
\end{proof}

\section{The Lower Bound}\label{sec: LBs}

We now establish our lower bound result as stated in Theorem~\ref{thm: exp INDEX LB}.

Given a random order \textsf{Unit Interval Selection} algorithm $\mathcal{A}$, we define protocol $Q$ that uses $\mathcal{A}$ to solve the $\textsf{Index}_t$ problem:

\begin{tcolorbox}[standard jigsaw,opacityback=0,width=0.95\textwidth,breakable]
  \textbf{Protocol $Q$ for $\textsf{INDEX}_t$:}\label{protocol: Q}

  \paragraph*{Input:}
  Let Alice and Bob be two players holding a $\textsf{INDEX}_t$ instance $(X,A)$ from the uniform distribution $\mu$.
  Let $\mathcal{A}$ be any algorithm for random-order \textsf{Unit Interval Selection}.
  For each $i\in[t]$, define the uniformly and independently distributed binary random variable $Y_i\in\{0,1\}$.
  Furthermore, define the uniformly distributed random variables $B_L = B_R \in [t]$.
  Let $T$ denote the set containing these random variables, that is, $T = \{Y_1,\dots,Y_t,B_L,B_R\}$.

  \paragraph*{The Protocol:}
    \begin{enumerate}
    \item Alice and Bob use shared public randomness to sample each variable in $T$. They also sample a bijection $\sigma:T\rightarrow[\vert T\vert]$.

    \item Let $j' = \min \{\sigma(B_L), \sigma(B_R)\}$.

    \item For each $i\in[t]$ such that $\sigma(Y_i) < j'$, Alice constructs the interval
      \[
        I[i] = \left[\frac{i}{t+1} + \frac{X[i]}{t^2}, 1 + \frac{i}{t+1} + \frac{X[i]}{t^2}\right].
      \]

     \item Alice runs $\mathcal{A}$ on these constructed intervals using the ordering $\sigma$. Note that $\sigma$ defines an ordering on these intervals by associating each $I[i]$ with $Y[i]$.
      Alice then sends the state of $\mathcal{A}$ to Bob as the message.

     \item For each $i\in[t]$ such that $\sigma(Y[i]) > j'$, Bob constructs the interval
     \[
       I[i] = \left[\frac{i}{t+1} + \frac{Y[i]}{t^2}, 1 + \frac{i}{t+1} + \frac{Y[i]}{t^2}\right].
     \]
     Bob also constructs the intervals
     \[
        J_L = \left[ \frac{A}{t+1} - \frac{1}{t^3} - 1, \frac{A}{t+1} - \frac{1}{t^3} \right]
     \]
     and
     \[
        J_R = \left[ 1 + \frac{A}{t+1} + \frac{1}{t^2} + \frac{1}{t^3}, 2 + \frac{A}{t+1} + \frac{1}{t^2} + \frac{1}{t^3} \right].
     \]

    \item Bob uses the message from Alice to continue running $\mathcal{A}$ on these intervals using the ordering $\sigma$, associating $J_L$ with $B_L$, and $J_R$ with $B_R$, under $\sigma$.
    \end{enumerate}

    \paragraph*{Output:}
    If either $\cA$ produces an output of size at most $2$, or $\sigma(Y[A]) > j'$, then Bob outputs a uniform random bit.
    Otherwise, it must be that $\cA$ outputs a solution of size three, and $\sigma(Y[A]) < j'$.
    As observed below, this solution must contain the interval $I[A]$. Then, if $I[A]$ is located at the position $[\frac{A}{t+1}, 1 + \frac{A}{t+1}]$ then Bob outputs the bit $x=0$, otherwise the interval must be located at position $[\frac{A}{t+1} + \frac{1}{t^2}, 1 + \frac{A}{t+1} + \frac{1}{t^2}]$ and Bob outputs the bit $x=1$.
\end{tcolorbox}
Protocol $Q$ works as follows. Given a uniform random input $(X, A)$ to $\textsf{INDEX}_t$, Alice and Bob first create a random interval instance that is fully described by uniform random bits $Y_1, \dots, Y_t$ as well as the index $A$ from the $\textsf{INDEX}_t$ input: Each bit $Y_i$ corresponds to an interval $I[i]$ so that the intervals $I[1], \dots, I[t]$ all mutually overlap as in Figure~\ref{fig:intervals}. The role of the bit $Y_i$ is as follows: If $Y_i = 1$ then the interval $I[i]$ is shifted slightly to the right, while this does not happen if $Y_i = 0$. Then, the index $A$ translates into two additional intervals $J_L$ and $J_R$ that surround the interval $I[A]$ so that $\{J_L, J_R, I[A] \}$ constitutes the unique independent set of size $3$.

We observe that the instance created so far is a uniform random instance since all the variables $Y_1, \dots, Y_t, A$ are independent uniform random variables. 

Next, Alice and Bob choose a uniform random ordering $\sigma$ of the intervals in this instance. We stress that, by construction, $\sigma$ is independent of the intervals created, i.e., of the quantities $Y_1, \dots, Y_t$ as well as $A$. 

Alice now replaces some of the bits $Y_i$ by the values $X_i$ as follows: If in the uniform random ordering $\sigma$ the interval $I[i]$ arrives before both $J_L$ and $J_R$, then $Y_i$ is updated to the value $X_i$. Distributionally speaking, we observe that this process replaces uniform random bits $Y_i$ by different uniform random bits $X_i$ (recall that the input to the $\textsf{INDEX}_t$ instance is a uniform random instance). Hence, the distribution of the resulting interval instance is a uniform distribution, and, most importantly, the order of the intervals is random and independent of the instance. The reduction thus creates a random order stream.

We observe that Alice only updates intervals that arrive before the intervals $J_L$ and $J_R$, or, in other words, the decision as to which intervals are updated depends on the stream order. One may falsely conclude that the resulting instance may therefore be correlated with the arrival order $\sigma$, which would mean that the resulting stream is not in uniform random order. We stress, however, that this is not the case since this process only replaces uniform random bits by a different set of uniform random bits. It is therefore not possible to establish a correlation.

Next, we see that whenever Alice encoded the correct bit $X[A]$ into interval $I[A]$, which, as we will see, happens with probability $1/3$, and if, at the same time, the algorithm $\mathcal{A}$ succeeds, then Alice and Bob are able to solve the underlying $\textsf{INDEX}_t$ instance. Otherwise, Bob outputs a uniform random bit. We will see that if $\mathcal{A}$ succeeds with probability greater than $2/3$ then the protocol $Q$ has a success probability strictly larger than $1/2$. The lower bound for $\textsf{INDEX}_t$ as stated in Theorem~\ref{thm: INDEX hardness} then implies that $Q$ must communicate $\Omega(t)$ bits, and thus, algorithm $\mathcal{A}$ must also use at least as much space.

\begin{figure}[H]
  \centering
  \begin{tikzpicture}[scale=0.7]
    \foreach \x in{1}
    {
    \draw[very thick] (\x,0.5) -- (3+\x,0.5);
    \draw[very thick] (\x+3/7,0.8) -- (3+\x+3/7,0.8);
    \draw[very thick] (\x+6/7,1.1) -- (3+\x+6/7,1.1);
    \draw[very thick] (\x+9/7,1.4) -- (3+\x+9/7,1.4);
    \draw[very thick] (\x+12/7,1.7) -- (3+\x+12/7,1.7);
    \draw[very thick] (\x+15/7,2) -- (3+\x+15/7,2);
    \draw[very thick] (\x+18/7,2.3) -- (3+\x+18/7,2.3);
    }
    \node at (5,2.7) {$I[0],\dots,I[t-1]$};

    \draw[very thick, red] (-3/7,1.7) -- (1+11/7,1.7);
    \node[red] at (15/14,2) {$J_L$};
    \draw[very thick, red] (1+3+13/7,1.7) -- (1+3+13/7+3,1.7);
    \node[red] at (1+3+13/7+1.5,2) {$J_R$};
\end{tikzpicture}
  \caption{An example set of intervals constructed by Protocol $Q$. The black intervals correspond to each $I[i]$, and the red intervals show $J_L$ and $J_R$. Whether or not each $I[i]$ interval was constructed using $X[j]$ or a public random bit depends on the sampled permutation $\sigma$. \label{fig:intervals}}
\end{figure}
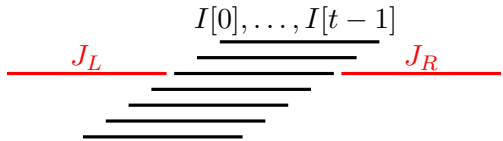

\INDEXLB*

\begin{proof}
    Let $\mathcal{A}$ be a randomized algorithm for \textsf{Unit Interval Selection} with approximation factor $2/3+\delta$, for any small constant $\delta$, that succeeds with probability at least $2/3 + \epsilon$, for some $\epsilon > 0$, on every input instance, where the probability is taken over both the stream order and the random coin flips of the algorithm. We will first prove that $\mathcal{A}$ must use $\Omega(n)$ space, i.e., we establish result 2. Then, we argue that result 2 immediately implies result 1.

    To see that result 2 holds, we first observe that, whenever algorithm $\mathcal{A}$ succeeds then it must output a solution of size $3$ since its approximation factor is strictly larger than $2/3$ and every instance created contains an optimal solution of size $3$. Second, as argued above, no matter the input instance created, it is presented to $\mathcal{A}$ in uniform random order, which implies that the success probability of the algorithm is indeed $2/3+\epsilon$. Third, we see that, with probability $1/3$, the interval $I[A]$ is created by Alice and thus correctly reflects the bit $X[A]$. We are thus interested in the probability that both $I[A]$ reflects $X[A]$ and the algorithm $\mathcal{A}$ succeeds on the resulting instance. 

    Using the events $\mathcal{E}_1 = $ "Alice creates $I[A]$" and $\mathcal{E}_2 = $"$\mathcal{A}$ succeeds", we obtain:
    \begin{align}
        \Pr[\mathcal{E}_1 \text{ and } \mathcal{E}_2]  
         = \Pr[\mathcal{E}_1] \cdot \Pr[\mathcal{E}_2 \ | \ \mathcal{E}_1 ] 
         = \frac{1}{3}  \cdot \Pr[\mathcal{E}_2 \ | \ \mathcal{E}_1 ]\ . \label{eqn:123}
    \end{align}
    Next, we use the fact that the success probability of $\mathcal{A}$ is at least $2/3+\epsilon$. Hence,
    \begin{align*}
        2/3 + \epsilon & \le \Pr[\mathcal{E}_2] = \Pr[{\mathcal{E}_1}] \cdot \Pr[\mathcal{E}_2 \ | \ \mathcal{E}_1] + (1- \Pr[{\mathcal{E}_1}]) \cdot \Pr[\mathcal{E}_2 \ | \ \overline{\mathcal{E}_1}] \\
        & = \frac{1}{3} \cdot \Pr[\mathcal{E}_2 \ | \ \mathcal{E}_1] + \frac{2}{3} \cdot \Pr[\mathcal{E}_2 \ | \ \overline{\mathcal{E}_1}] \\
        & \le  \frac{1}{3} \cdot \Pr[\mathcal{E}_2 \ | \ \mathcal{E}_1] + \frac{2}{3} \cdot 1 \ ,
    \end{align*}
    which implies $
        \epsilon \le \Pr[\mathcal{E}_2 \ | \ \mathcal{E}_1] \ . $
Using this bound in Equality~\ref{eqn:123} yields $\Pr[\mathcal{E}_1 \text{ and } \mathcal{E}_2] \ge \frac{1}{3} \cdot \epsilon$. In other words, with probability $\frac{1}{3} \cdot \epsilon$, Bob solves the underlying $\textsf{Index}_t$ instance. 

Lastly, in all other cases, i.e., with probability $1-\epsilon / 3$, Bob outputs a uniform random bit. Hence, the protocol $Q$ solves the $\textsf{Index}_t$ instance with probability $\epsilon / 3 + (1 - \epsilon / 3) \cdot \frac{1}{2} = \frac{1}{2} + \epsilon / 6$. Then, by Theorem~\ref{thm: INDEX hardness}, the resulting protocol must use a message of size $\Omega(t)$ bits, and since the message size is identical to the space used by our algorithm, the same lower bound applies to the space usage of $\mathcal{A}$. Last, since $n = t+2$ in our construction, result 2 follows.

To see that result 1 holds, consider an algorithm $\mathcal{A}$ for \textsf{Unit Interval Selection} with expected approximation factor $\frac{8}{9} + \epsilon$, for some $\epsilon > 0$, where the expectation is taken over the random coin flips of the algorithm as well as the stream order. We will now argue that $\mathcal{A}$ achieves an approximation factor of at least $2/3 + \epsilon$ with probability $2/3 + \epsilon$. Then, by result 2 we obtain that $\mathcal{A}$ must use $\Omega(n)$ bits of space, thus establishing result 1.

We compute:
\begin{align*}
    \frac{8}{9} + \epsilon & \le \Exp[\text{approx factor}] \\
    & \le \Pr[\text{approx factor} \le 2/3 +\epsilon] \cdot (2/3 + \epsilon) + \Pr[\text{approx factor} > 2/3  + \epsilon] \cdot 1 \\
    & = (1 - \Pr[\text{approx factor} > 2/3 + \epsilon ]) \cdot (2/3 + \epsilon)+ \Pr[\text{approx factor} > 2/3 +\epsilon] \ , 
\end{align*}
which implies $\Pr[\text{approx factor} > 2/3 + \epsilon ] \ge \frac{2}{3} + \epsilon$ as desired.
\end{proof}

\section{Conclusion} \label{sec: conclusion}
In this paper, we gave a one-pass random order streaming algorithm for \textsf{Unit Interval Selection} with expected approximation factor $0.7401$ that uses space $O(|OPT|)$, where the expectation is taken over the order of the stream. We also proved that, within this space constraint, no algorithm can achieve an expected approximation factor beyond $8/9$, and no algorithm can have an approximation factor better than $2/3$ with probability above $2/3$. 

We conclude with two open questions. 
\begin{enumerate}
\item Our results are not tight. Is there either an algorithm with improved expected approximation guarantee or is it possible to prove a stronger lower bound? 

\item We have not addressed arbitrary-length intervals. Is there an algorithm with expected approximation factor above $1/2$ for \textsf{Interval Selection} on arbitrary-length intervals?
\end{enumerate}

\bibliographystyle{plainurl}
\bibliography{random-order-intervals}

\newpage

\appendix

\section{The Expected Approximation Factors Achieved for Different Values of \texorpdfstring{$\Delta$}{Delta}}\label{app: different gamma}

As discussed in the proof of Theorem \ref{thm: alg}, the approximation factor achieved by our algorithm depends on the choice of $\Delta$.
For completeness, we now discuss the approximation factors achieved by different (constant) values of $\Delta$.

We do this by computing them numerically, and this can be done efficiently via dynamic programming.
Doing this shows us that the approximation factor appears to be an increasing function of $\Delta$, and, by choosing $\Delta = 5,000$, we achieve an expected approximation factor larger than $\sim0.7401$.
Interestingly, any integer $\Delta$ at least five suffices to beat the adversarial-order $2/3$-barrier.
This is also useful for giving us a way to 'manually prove' a stronger-than $2/3$-result by considering the simpler case of $\Delta = 5$.

A natural follow up question is then how much this can be improved by choosing an even larger value of $\Delta$.
To answer this, we can first decompose the expected approximation factor of the algorithm (Equation \ref{eq: overall approx}) into two components.
The term $(\Delta-1)/\Delta$ is the 'loss' accrued by going from a restricted domain to an unrestricted domain via Lemma \ref{lem: unrestricted domain}, and the remaining term is the approximation factor achieved within a restricted domain via Algorithm \ref{alg: alg}.
It is clear that $(\Delta-1)/\Delta \leq 1$, and so the entire quantity can be upper bounded by the approximation factor achieved within a restricted domain via Algorithm \ref{alg: alg}.

We can now compute these values numerically, and doing so gives us a bound of $\sim0.7401803$ with a window length of $\Delta = 100,000$.
This suggests that we could not hope to achieve too much more by simply picking a larger value of $\Delta$.

These ideas are illustrated in Figures \ref{fig: gamma leq 20} and \ref{fig: gamma leq 100}.
In both figures, for different values of $\Delta$, the blue dots represent the expected approximation factor achieved by the general algorithm for unrestricted domains (Theorem \ref{thm: alg}), and the orange dots represent the approximation factor achieved by Algorithm \ref{alg: alg} on restricted domains (Lemma \ref{lem: alg restricted domains}).
As discussed, the orange dots act as an upper bound for the blue dots.
The dotted red lines indicate the $2/3$-barrier from the adversarial order settings, and the dotted blue line indicates the value $0.740175$, which lies in between our computed values for both approximation factors.
Note that the dotted blue line is shown to aid intuition, but does not correspond to any theoretically computed values.
The dotted green line indicates the $8/9$-approximation barrier for algorithms that use space $o(n)$ (from Theorem \ref{thm: exp INDEX LB}).

\begin{figure}
    \centering
    \begin{minipage}{0.48\textwidth}
        \centering
        \includegraphics[width=1\textwidth]{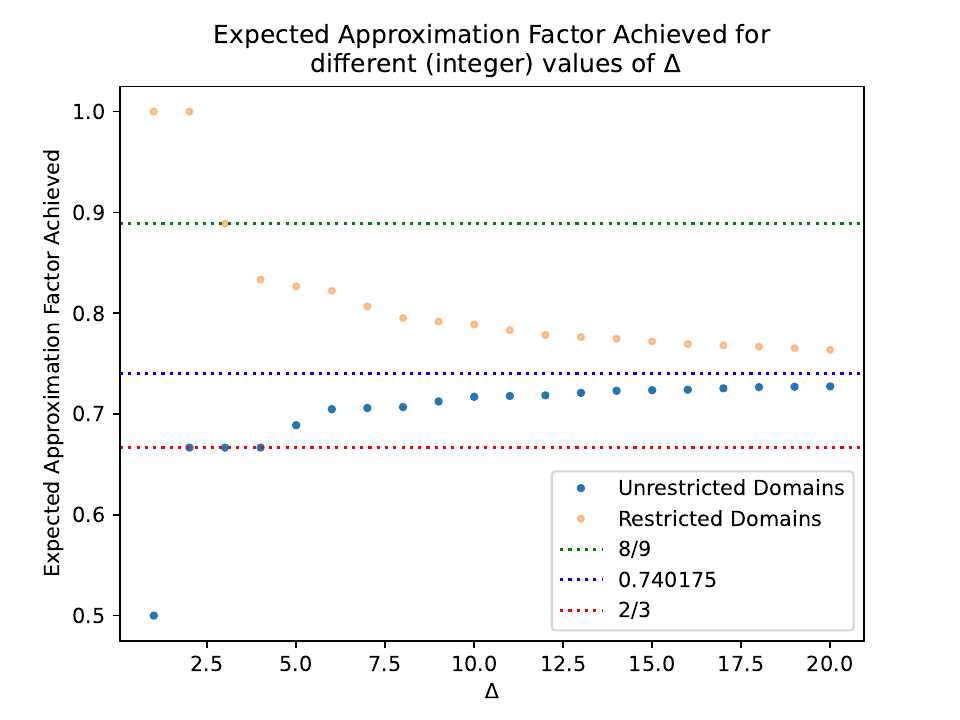} 
        \caption{The expected approximation factor achieved for each integer $\Delta \leq 20$.}
        \label{fig: gamma leq 20}
    \end{minipage}\hfill
    \begin{minipage}{0.48\textwidth}
        \centering
        \includegraphics[width=1\textwidth]{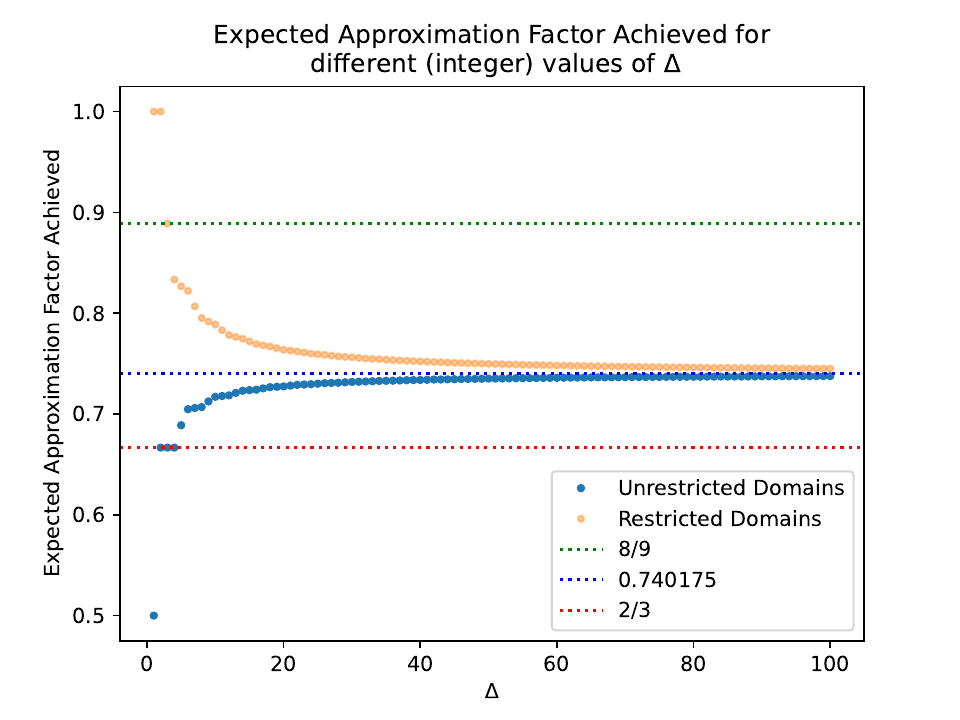} 
        \caption{The expected approximation factor achieved for each integer $\Delta \leq 100$.}
        \label{fig: gamma leq 100}
    \end{minipage}
\end{figure}

\end{document}